\newtheorem{lemma}{Lemma}
\newtheorem{proposition}{Proposition}
\newtheorem{proof}{Proof}
\DeclareMathOperator*{\argmax}{argmax}
\begin{document}

\title{Transmit Covariance and Waveform Optimization for Non-orthogonal CP-FBMA System}

\author{Yuhao~Qi,
        Jian~Dang,~\IEEEmembership{Member,~IEEE,}
        Zaichen~Zhang,~\IEEEmembership{Senior~Member,~IEEE,}
        Liang~Wu,~\IEEEmembership{Member,~IEEE,}
        and~Yongpeng~Wu,~\IEEEmembership{Senior~Member,~IEEE}
\thanks{This work was supported by the National Key R\&D Program of China under grant 2018YFB1801101 and 2016YFB0502202, NSFC projects (61971136, 61601119, 61960206005, and 61803211), Jiangsu NSF project (No. BK20191261), Zhejiang Lab (No. 2019LC0AB02), Young Elite Scientist Sponsorship Program by CAST (YESS20160042), and Zhishan Youth Scholar Program of SEU. Part of this paper was published in the 11th International Conference on Wireless Communications and Signal Processing, 2019.}
\thanks{Y. Qi, J. Dang, Z. Zhang, and L. Wu are with the National Mobile Communications Research Laboratory, Southeast University, Nanjing 210096, China. Z. Zhang is also with the Purple Mountain Laboratories, Nanjing 211111, China. Y. Wu is with the Department of Electronic Engineering, Shanghai Jiao Tong University, Minhang 200240, China (e-mail: qiyuhao@seu.edu.cn; dangjian@seu.edu.cn; zczhang@seu.edu.cn; wuliang@seu.edu.cn; yongpeng.wu@sjtu.edu.cn).}
\thanks{Corresponding authors: J. Dang (dangjian@seu.edu.cn) and Z. Zhang (zczhang@seu.edu.cn)}}

\maketitle

\begin{abstract}
Filter bank multiple access (FBMA) without subbands orthogonality has been proposed as a new candidate waveform to better meet the requirements of future wireless communication systems and scenarios. It has the ability to process directly the complex symbols without any fancy preprocessing. Along with the usage of cyclic prefix (CP) and wide-banded subband design, CP-FBMA can further improve the peak-to-average power ratio and bit error rate performance while reducing the length of filters. However, the potential gain of removing the orthogonality constraint on the subband filters in the system has not been fully exploited from the perspective of waveform design, which inspires us to optimize the subband filters for CP-FBMA system to maximizing the achievable rate. Besides, we propose a joint optimization algorithm to optimize both the waveform and the covariance matrices iteratively. Furthermore, the joint optimization algorithm can meet the requirements of filter design in practical applications in which the available spectrum consists of several isolated bandwidth parts. Both general framework and detailed derivation of the algorithms are presented. Simulation results show that the algorithms converge after only a few iterations and can improve the sum rate dramatically while reducing the transmission delay of information symbols.
\end{abstract}

\begin{IEEEkeywords}
Filter bank multi-carrier, CP-FBMA, non-orthogonality, waveform design, covariance matrix, achievable rate.
\end{IEEEkeywords}

\IEEEpeerreviewmaketitle

\section{Introduction}
Waveform design has become a hot topic in the fifth generation (5G) communications research to better satisfy the stringent requirements such as low out-of-band (OOB) radiation, robustness against carrier frequency offset (CFO) and flexible radio access \cite{5GNOW}. As \cite{ABMH19} pointed out, in order to meet the diverse requirements of 5G services, two primary candidate approaches had been proposed by researchers. The first approach, proposition of new waveforms for 5G, had not been adopted in current standards. Instead, the third generation partnership project (3GPP) adopted the second approach in Release-15 \cite{Release-15}, i.e., using different numerologies of the same parent waveform, cyclic prefix orthogonal frequency division multiplexing (CP-OFDM), mainly due to that OFDM has its own advantages and that OFDM was among the key ingredients in the commercialized fourth generation (4G) communication system. Besides, to meet the spectral emission requirement, windowing or filtering processing is required for plain CP-OFDM in 5G new radio (5G-NR). Compared with plain CP-OFDM, windowed or filtered OFDM (F-OFDM) \cite{JAJM15} can effectively reduce OOB emissions and be received with plain CP-OFDM receiver. Even though new waveforms were not selected by 3GPP for 5G-NR, these schemes remain as an interesting choice for future system development \cite{ALJM19}, as windowed or filtered OFDM has some limitations \cite{RNSM17} such as reduced spectral efficiency, and lower robustness in frequency selective channels. Furthermore, filtering or windowing may not provide as low OOB emissions as some new waveforms (e.g., FBMC). In fact, \cite{AYHA20} discussed this issue in the paradigm of sixth generation (6G) and addressed that multiple waveforms may be utilized together in the same frame for the next generation of wireless communications standards. Therein, examples and various reasons were also given to support this claim. Among various new waveforms including filter bank multi-carrier (FBMC) \cite{BFB11,FSTW14,TLLZ15}, universal-filtered multi-carrier (UFMC) \cite{VWSB13}, generalized frequency division multiplexing (GFDM) \cite{MMGC14}, etc., FBMC is a basic one and has been intensively studied. It consists of frequency-well-localized subband filters, and each data stream is processed individually by the filter in that subband. In fact, FBMC has been recognized as a competitive candidate in some new scenarios such as carrier aggregation, cognitive radio with spectrum sensing, and even in  dense wavelength multiplexing passive optical network (DWDM-PON) based fronthaul using a radio-over-fiber technique \cite{SJSJ19}. The most popular form of FBMC is offset quadrature amplitude modulation (FBMC/OQAM) \cite{BFB11,PCNL02}, which splits the input QAM symbols into real and imaginary parts and requires orthogonality between subbands in the real domain for improving the time-frequency localization of the waveform and to achieve maximal spectral efficiency. Other variants of FBMC include FBMC/QAM \cite{RZDL10,YCKZ15,MDMT15} and exponentially modulated filter bank (EMFB) \cite{JAMR03,AVIH09}. FBMC/QAM directly processes the input QAM symbols to avoid the two-dimensional intrinsic interference problem in FBMC/OQAM \cite{RZDL10}. EMFB shares similar principles with FBMC/OQAM but has more efficient implementation and simpler mathematical tractability \cite{AVIH09}. In this paper, we pick EMFB as a representative of conventional FBMC which maintains the orthogonality between subbands in the real domain. However, in practice, orthogonality is susceptible to non-ideal factors including channel distortion. In such a case, equalization must be employed with the aim of restoring the orthogonality. The equalization for conventional FBMC can be very simple, yet its performance is not satisfactory especially for channels with severe frequency selectivity and usually upper bounded by that of the equalized OFDM system. Besides, the real domain orthogonality means the power on the imaginary part of the received signal may not be well utilized for symbol detection. Moreover, orthogonal schemes are ineligible to achieve the entire capacity region of the channel. Taking the above issues into consideration, \cite{DZWW17} proposed a new framework of non-orthogonal FBMC system which drops the unnecessary orthogonality constraint imposed on the filter bank design, thus improving the equalization performance significantly which can be even much better than that of OFDM system. The input symbols to the synthesis filter bank (SFB) can now be complex ones with clear physical meaning and there is no need to split them into real and imaginary parts any more as in FBMC/OQAM. When used in multi-user communication, non-orthogonal FBMC is referred to as non-orthogonal filter bank multiple access (FBMA), where every subband is exclusively allocated to a specific user for access. \cite{DGZW17} introduced cyclic prefix (CP) and wide-banded subband design into non-orthogonal FBMA system. The so called CP-FBMA further brings several new desirable advantages such as improved equalization performance with much lower computational complexity, reduced length of filters and number of subcarriers, which lead to lower transmission latency and peak-to-average power ratio (PAPR), respectively. In addition, it has been proved that CP-FBMA is capacity achieving in certain system configurations \cite{DWZW18}.

A potential key advantage of utilizing non-orthogonal FBMA that has been recognized but not well exploited is the released freedom of filter design. Since non-orthogonality means that there is no stringent constraint on the filters any longer, a natural question is, whether we can optimize the waveform to further improve the system performance such as the achievable rate?

The answer is intuitively at least non-negative. Actually, when transferring the objective of filters design from satisfying real-domain orthogonality between subbands, which is usually a strict condition, to that of maximizing some direct performance metrics such as sum rate, it is highly possible that the investigated performance metric can be improved to some extent compared to the conventional orthogonal FBMA and non-orthogonal FBMA with non-optimized filters. To fully exploit the released design freedom, we need figure out what are the constraints that should impose to filters in this case. It turns out that there are not many constraints on filters except for length and energy. In fact, without orthogonality we may even extend the filters from wide-band in \cite{DGZW17} to full-band, as long as they improve the desired figure-of-merit of the system. Therefore, in this work, we firstly propose a filter optimization algorithm in time domain with length constraint under unit energy. Then, we extend the method and add stopband energy constraints to represent spectrum occupation requirement. As a preliminary work, we optimize the filters based on instantaneous channel state information (CSI) in this work with the aim of exposing the potential gain of non-orthogonal FBMA by filter optimization. In addition, with known CSI, the covariance matrices of users' symbol vectors should also be optimized. Thus, a joint optimization algorithm is also proposed in this work.

Since FBMC with non-orthogonal subband design is a relatively new concept, there is limited work on its waveform optimization. Most existing works are for prototype filter design. For instance, \cite{CQJH13} employed the $\alpha$-based branch-and-bound algorithm on the prototype filter optimization to minimize the stopband energy for FBMC systems. With the same objective, \cite{TCLJ19} considered the constraint on the time-domain channel estimation performance in terms of the mean squared error of estimated channel impulse response, and employed the method of Lagrange multipliers and the
Newton's method to optimize the prototype filter. \cite{HKHP17} suggested a simple waveform design for FBMC considering the time domain localization based on single prototype filter, which is more appropriate for the multi-path fading channel. \cite{TGXZ18} optimized the stopband of the prototype filter to reduce inter-carrier interference between adjacent subcarriers. \cite{RTTA19} proposed a prototype filter design based on convex optimization to minimize the OOB pulse energy. \cite{AFDB17} took the scenario of FBMC-based massive multiple-input multiple-output (MIMO) into account and proposed a prototype filter design method to improve the signal-to-interference-plus-noise ratio (SINR) performance of the system. Although the specific optimization methods in those works are of reference value, a common thing in those works is that the filters are in general designed to satisfy the perfect reconstruction (PR)/nearly perfect reconstruction (NPR) constraint \footnote{PR/NPR is interpreted as orthogonality from signal processing perspective rather than communication perspective.} or manage the self-interference level without PR/NPR constraint, and derived from a common prototype filter. On the other hand, in non-orthogonal FBMA, the subband filters may not have similar frequency domain structures and thus cannot be derived from a single prototype filter. Therefore, the filter design methods in existing works cannot be applied directly in this work.

Specifically, we allow each user in the system to occupy the full available bandwidth and propose a filter waveform design algorithm to maximize the sum rate according to the CSI. The time domain filter coefficients are selected as the optimization variables and obtained in an iterative procedure. In addition, we propose a manifold-based gradient ascent algorithm to solve the suboptimization problem encountered in the iterative process. On this foundation, we propose a joint waveform and covariance matrix optimization algorithm, where the filters and covariance matrices of the input symbol vectors are optimized alternatively in each iteration. Besides, the joint optimization algorithm is capable of accommodating for the stopband energy constraints, which is useful in practical applications with spectrum occupation restrictions.

For clarity, we summarize our contributions as follows:
\begin{itemize}
\item We propose a full-band waveform optimization algorithm for uplink CP-FBMA system to maximize the sum rate, which leads to a suboptimization problem. To cope with it, we propose a manifold-based gradient ascent algorithm to obtain a suboptimal solution.

\item We propose a joint waveform and covariance matrix optimization algorithm and take stopband energy constraints into account, which leads to two suboptimization problems. The first one is solved based on semidefinite program (SDP). The second one is solved based on generalized singular value decomposition (GSVD) and water-filling algorithm.
\end{itemize}

The rest of this paper is organized as follows. Section II presents the system model of non-orthogonal CP-FBMA. Section III proposes a waveform optimization algorithm on the full band. Section IV proposes a joint waveform and covariance matrix optimization algorithm. Section V provides simulation results. Section VI concludes this paper.

\emph{Notation}: Matrices are denoted by bold uppercase letters (e.g., $\mathbf{X}$). Vectors are represented by bold \emph{italic} letters (e.g., $\boldsymbol{X}$ or $\boldsymbol{x}$). Scalars are denoted by normal font (e.g., $x$). The real and complex number fields are denoted by $\mathbb{R}$ and $\mathbb{C}$, respectively. $\mathfrak{R}[\cdot]$ is the real part of a complex number. $\mathbb{E}[\cdot]$ represents the statistical expectation. $[\cdot]_i$ and $\Vert\cdot\Vert$ are the $i$th entry and the two-norm of a vector, respectively. $[\cdot]_{i,j}$ is the entry for the $i$th row and $j$th column of a matrix. $(\cdot)^T$ and $(\cdot)^H$ stand for transpose and Hermitian transpose. ${\rm tr}(\cdot)$, ${\rm rank}(\cdot)$ and $\vert\cdot\vert$ denote the trace, rank and determinant of a matrix. ${\rm diag}(\boldsymbol{x})$ represents the diagonal matrix whose main diagonal elements are given by the vector $\boldsymbol{x}$. $\mathbf{I}_i$ denotes the identity matrix of size $i\times i$, $\mathbf{W}_i$ denotes the $i$-point discrete Fourier transform (DFT) matrix, and $\mathbf{0}_{i\times j}$ denotes the all zero matrix of size $i\times j$.

\section{System Model}
In this section, we briefly review the system model of non-orthogonal CP-FBMA for further study. Fig. \ref{system model} shows the transceiver structure of an $M$-subband CP-FBMA system for $M$ users in the uplink, where each user occupies a wide subband rather than a group of narrow subbands as in conventional FBMA. For example, if the total number of users is $8$, the total number of subbands for conventional narrow-banded FBMA may be $128$ (each user occupies $16$ subbands), whereas the total number of subbands for wide-banded CP-FBMA is just $8$, which greatly reduces the number of subbands. In addition, in general, the length of the filter on each subband is a few times of the number of subbands (e.g., the filter length can be $4M$). For narrow-banded FBMA, the filter length could be $512$, whereas that of wide-banded CP-FBMA can be $32$, which also reduces the length of convolution per subband thus the complexity of the system. Although a single subband is employed in the wide-banded scheme, the occupied bandwidth of each user is still the same as that of the narrow-banded scheme.

\begin{figure*}
\setlength{\abovecaptionskip}{-0.5cm}
\centering
\includegraphics[width=6.2in]{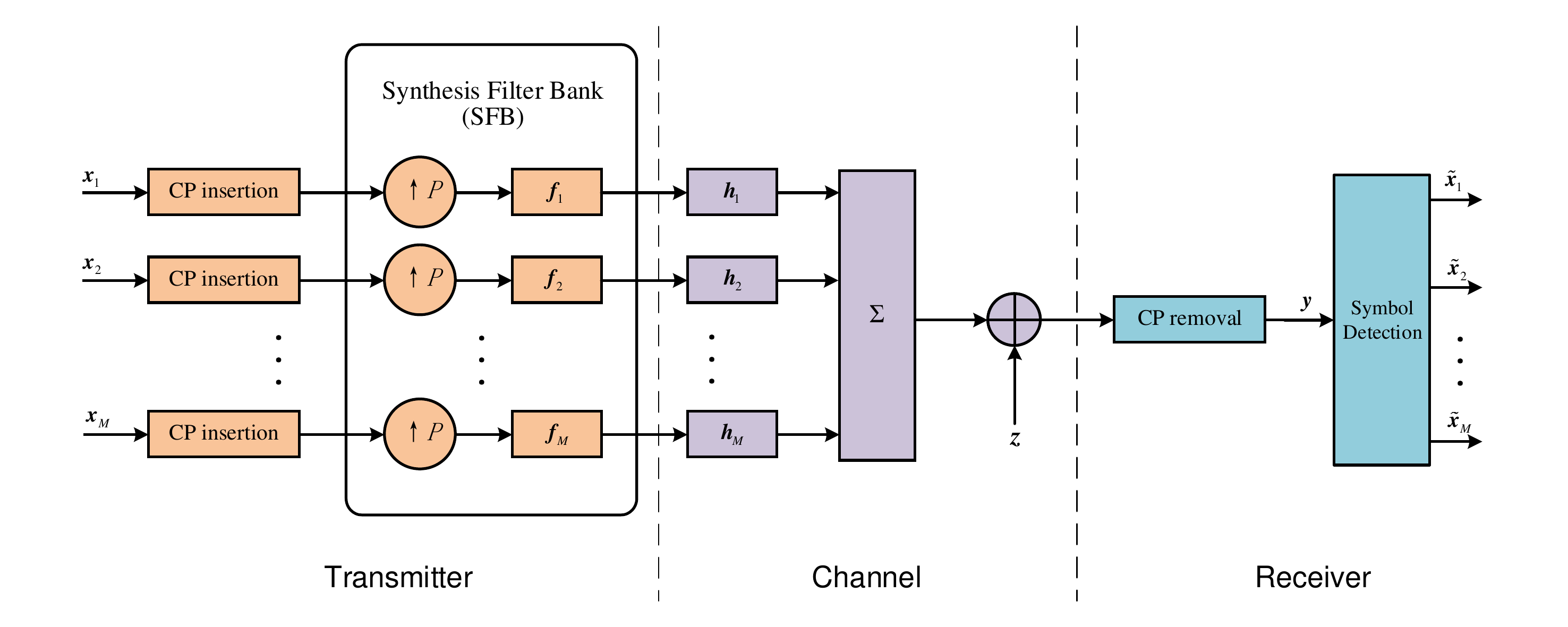}
\centering
\caption{Transceiver structure of CP-FBMA system with a total of $M$ subbands.}
\label{system model}
\end{figure*}

\subsection{Transmitter Structure}
At the transmitter side, the input QAM-modulated symbol block $\boldsymbol{x}_m$ on subband-$m$ is assumed to be of length $N$. It is appended by a CP of length $L_g$ and fed to the SFB for filtering with upsampling factor $P\leq M$. When $P=M$, the system is called critically sampled. When $P<M$, the system is called oversampled. The length of the synthesis filter $\boldsymbol{f}_m$ is $N_f$. Generally, $N_f$ is a few times of $M$. The signal on subband-$m$ is transmitted through a multi-path fading channel $\boldsymbol{h}_m$ of length $L_h$. In this paper, the channels $\boldsymbol{h}_1,...,\boldsymbol{h}_M$, are assumed to be perfectly known. Regarding channel estimation, some existing methods can be used, such as pilot-based channel estimation. It should be noted that CP is inserted on the symbols before SFB rather than after SFB as in \cite{DXTX15}. Therefore, the power spectral density of the transmit signals is solely determined by response of the filters and not be affected by CP. Besides, the effective length of CP after upsampling is $L_gP$ which should cover both the length of the physical channel and the length of the filter at the transmitter, i.e., $L_gP\geq N_f+L_h-1$.

The main differences between conventional FBMA and CP-FBMA are as follows: 1) in conventional FBMA, the source bits are directly mapped to real PAM symbols (for FBMC/OQAM, the modulated QAM symbols need to be split into real and imaginary parts), while in CP-FBMA, the transmit symbols are standard complex QAM symbols without any fancy preprocessing; 2) in conventional FBMA, the available bandwidth is divided into many narrow banded subbands, and a group of successive subbands is exclusively allocated to a user for access, while in CP-FBMA, the narrow subbands are replaced with a wide subband for each user; 3) in conventional FBMA, the upsampling factor equals to half the number of subbands, i.e., $P=M/2$, while in CP-FBMA, $P$ is a parameter that can be set more flexibly.

\subsection{Receiver Structure}
At the receiver side, the aggregated signal is corrupted by an additive white Gaussian noise (AWGN) whose samples $\boldsymbol{z}$ are i.i.d. complex Gaussian random variables with zero mean and variance $N_0$. The received signal after removing the effective CP is denoted by $\boldsymbol{y}$, which can be written as
\begin{align}\label{bmy}
\boldsymbol{y}=\sum_{m=1}^M \mathbf{H}_m \mathbf{F}_m \mathbf{U} \boldsymbol{x}_m+\boldsymbol{z},
\end{align}
where $\mathbf{H}_m$, $\mathbf{F}_m\in \mathbb{C}^{NP\times NP}$ are circular matrices whose first columns are given by zero-padded version of $\boldsymbol{h}_m$ and $\boldsymbol{f}_m$, respectively. $\mathbf{U}\in \mathbb{R}^{NP\times N}$ is the upsampling operator whose $(nP)$th rows ($n=0,...,N-1$) form an identity matrix of size $N\times N$ while all the other rows are zeros. Then the symbol detection module recovers the transmitted symbols from $\boldsymbol{y}$. Regarding the symbol detection scheme for non-orthogonal CP-FBMA system, the most recent work is proposed in \cite{DQZW19}, which uses iterative LMMSE estimation. To be more specific, $\mathbf{H}_m$ and $\mathbf{F}_m$ can be diagonalized, i.e., $\mathbf{H}_m=\mathbf{W}_{NP}^H\mathbf{\Lambda}_{\boldsymbol{H}_m}\mathbf{W}_{NP}$, $\mathbf{F}_m=\mathbf{W}_{NP}^H\mathbf{\Lambda}_{\boldsymbol{F}_m}\mathbf{W}_{NP}$, where $\mathbf{\Lambda}_{\boldsymbol{H}_m}$, $\mathbf{\Lambda}_{\boldsymbol{F}_m}\in\mathbb{C}^{NP\times NP}$ are diagonal matrices whose main diagonal entries are given by the $(NP)$-point DFT of $\boldsymbol{h}_m$ and $\boldsymbol{f}_m$, respectively. Thus, (\ref{bmy}) is rewritten as
\begin{align}
\boldsymbol{Y}=\sum_{m=1}^{M} \mathbf{\Lambda}_{\boldsymbol{H}_m} \mathbf{\Lambda}_{\boldsymbol{F}_m} \mathbf{W}_{NP} \mathbf{U} \mathbf{W}_N^H \mathbf{W}_N \boldsymbol{x}_m+\boldsymbol{Z}=\mathbf{T} \mathbf{W}_{(M)} \boldsymbol{x}+\boldsymbol{Z},
\end{align}
where $\boldsymbol{Y}=\mathbf{W}_{NP}\boldsymbol{y}$, $\boldsymbol{Z}=\mathbf{W}_{NP}\boldsymbol{z}$, $\mathbf{T}=[\mathbf{T}_1,...,\mathbf{T}_{M}]$ with $\mathbf{T}_m=\mathbf{\Lambda}_{\boldsymbol{H}_m} \mathbf{\Lambda}_{\boldsymbol{F}_m} \mathbf{W}_{NP} \mathbf{U} \mathbf{W}_N^H$, $\boldsymbol{x}=[\boldsymbol{x}_1^T,...,\boldsymbol{x}_{M}^T]^T$, and $\mathbf{W}_{(M)}$ is a block diagonal matrix whose $M$ subblocks on the main diagonal are all given by $\mathbf{W}_N$. Now, $\boldsymbol{x}$ can be estimated based on LMMSE estimator as
\begin{align}\label{hatx}
\boldsymbol{\hat{x}}=\mathbf{W}_{(M)}^H (\mathbf{T}^H \mathbf{T}+N_0 \mathbf{I}_{NP})^{-1} \mathbf{T}^H \boldsymbol{Y}.
\end{align}
With initial decision of $\boldsymbol{x}_m$, iterative multi-user interference cancellation can be performed to further improve the detection performance. It totally avoids the usage of analysis filter bank (AFB) at CP-FBMA receiver for the following reasons: 1) if AFB is kept, we will decode data on $a(\boldsymbol{y})$, where function $a(\cdot)$ denotes the operation of AFB; 2) if AFB is removed, we will decode data directly on $\boldsymbol{y}$, which is exactly the detection algorithm mentioned above. According to data processing inequality, the mutual information between the original information symbol $\boldsymbol{x}$ and $\boldsymbol{y}$, $a(\boldsymbol{y})$, satisfies $I(\boldsymbol{x};\boldsymbol{y})\geq I(\boldsymbol{x};a(\boldsymbol{y}))$, i.e., any operation on $\boldsymbol{y}$ will results in potential loss of information, unless $a(\cdot)$ is some sufficient statistics. However, there is no proof that AFB has such property. Though AFB is eliminated, some inherent filtering is still contained. On one hand, at the transmitter, the channel state information (CSI) of all users are considered in the filter design for each user, where some ``rough filtering'' is implicitly contained in this process. In other words, if one user would perceive strong interferers in a certain frequency band from the signals of other users, some inherent filtering is naturally considered in its filter optimization process to maximize the sum rate. On the other hand, the task of the receiver is to recover $\boldsymbol{x}$ from the immediate received signal $\boldsymbol{y}$. For any specific user, $\boldsymbol{x}$ can be regarded as the combination of the user's ``signal'' and its perceived ``interferers''. Therefore, any detection algorithm that can successfully recover $\boldsymbol{x}$ from $\boldsymbol{y}$ must have conducted some inherent filtering in its processing. Therefore, we can view the interference filtering as done implicitly in the proposed system. Moreover, avoiding using AFB has some additional advantages, including the important one that making the noise vector $\boldsymbol{z}$ in (\ref{bmy}) not be processed by filters so that it keeps an identity covariance matrix, which simplifies the waveform design in the following analysis.

\subsection{Processing Complexity}
In order to clarify the processing complexity of CP-FBMA, in this subsection, we show a numerical comparison with CP-OFDMA. Assume that the number of users is $M$ for both systems. In CP-OFDMA, at the transmitter, inverse fast Fourier transform (IFFT) is applied for the modulation of the OFDM signal for each user. Therefore, the processing complexity is $\mathcal{O}(M\text{log}_2M')$ per QAM symbol, where $M'$ is the number of subcarriers in CP-OFDMA; at the receiver, since FFT can be applied for the demodulation of the received OFDM signal, the processing complexity is $\mathcal{O}(\text{log}_2M')$ per QAM symbol, where the complexity of equalization $\mathcal{O}(1)$ is omitted.

In CP-FBMA, at the transmitter, the processing complexity comes from the linear convolution of the symbol blocks and synthesis filters, and the processing complexity is $\mathcal{O}(((N+L_g)N_f)/N)$ per QAM symbol; at the receiver, the computational cost is dominated the inversion of matrix $\mathbf{T}^H\mathbf{T}+N_0\mathbf{I}_{NP}$ and matrix products, whose complexity is $\mathcal{O}(N^3P^3)$ and $\mathcal{O}(NP^3+N^2P)$, respectively. Note that $\mathbf{T}^H\mathbf{T}+N_0\mathbf{I}_{NP}$ is a block matrix with $P\times P$ subblocks, and each subblock is a diagonal matrix of size $N$. By rearranging the entries, $\mathbf{T}^H\mathbf{T}+N_0\mathbf{I}_{NP}$ can be transformed to a block diagonal matrix. Specifically, define an interleaving matrix $\mathbf{\Omega}$ of size $NP\times NP$. For $\forall\omega\in \{0,...,NP-1\}$, if $\omega={\omega}_1P+{\omega}_2$ where ${\omega}_1$ and ${\omega}_2$ are non-negative integers, the $\omega$th row of $\mathbf{\Omega}$ is all zeros except for the $({\omega}_1+{\omega}_2N)$th entry. The trick is that $\mathbf{\Omega}(\mathbf{T}^H\mathbf{T}+N_0\mathbf{I}_{NP})\mathbf{\Omega}^T$ is a block diagonal matrix with $N$ subblocks on the main diagonal, and the size of each subblock is $P\times P$. Noting that $\mathbf{\Omega}^T\mathbf{\Omega}=\mathbf{I}_{NP}$, we can turn to computing the inversion of $\mathbf{\Omega}(\mathbf{T}^H\mathbf{T}+N_0\mathbf{I}_{NP})\mathbf{\Omega}^T$, which involves $N$ matrix inversions with size $P\times P$, rather than the inversion of $\mathbf{T}^H\mathbf{T}+N_0\mathbf{I}_{NP}$ directly. Therefore, the complexity of the matrix inversion reduces to $\mathcal{O}(NP^3)$, and the processing complexity at the receiver is $\mathcal{O}((P^3+NP)/M)$ per QAM symbol.

To investigate the processing complexity more intuitively, we present the evaluation of execution time of the signal processing under the parameter setting in Table \ref{tab1}.

\begin{table}[ht]
\caption{System parameters setup for execution time evaluation}
\renewcommand{\arraystretch}{1}
\begin{center}
{\begin{tabular}{|c|c|c|}\hline\label{tab1}
\textbf{Systems} & \textbf{Parameters} & \textbf{Values} \\\hline
\multirow{3}*{All}&Number of users $M$ & 8 \\
\cline{2-3}
&Bandwidth & 30.72MHz\\
\cline{2-3}
&Modulation format & 16-QAM\\
\hline

\multirow{3}*{CP-OFDMA}&Number of subcarriers $M'$ & 2048 \\
\cline{2-3}
&Frequency-spacing & 15kHz\\
\cline{2-3}
&Length of CP & 144\\
\hline

\multirow{3}*{CP-FBMA}&Number of subbands & 8 \\
\cline{2-3}
&Length of synthesis filters $N_f$ & 32 \\
\cline{2-3}
&Length of symbol block $N$ & 64 \\
\hline

\end{tabular}}{}
\end{center}
\end{table}

For a fair comparison, the total numbers of QAM symbols transmitted by the two systems are the same. The simulation is based on MATLAB 2018b in a regular 2.30 GHz laptop, and the channel model is $10$-
tap complex random Gaussian channel with equal power gains among the taps and the total power is normalized to unity. Table \ref{tab2} plots the above complexity analysis and experimental results of the average execution time at both transmitters and receivers.

\begin{table}[ht]
\small
\caption{Complexity and experimental results}
\renewcommand{\arraystretch}{1}
\begin{center}
{\begin{tabular}{|c|c|c|c|c|}\hline\label{tab2}
\multirow{2}*{\diagbox[width=12.5em]{Indexes}{Systems}} & \multicolumn{2}{|c|}{\textbf{CP-OFDMA}} & \multicolumn{2}{|c|}{\textbf{CP-FBMA}} \\
\cline{2-5}
& Transmitter & Receiver & Transmitter & Receiver \\
\hline

Complexity in $\mathcal{O}(\cdot)$ notation  & $M\text{log}_2M'$ & $\text{log}_2M'$ & $((N+L_g)N_f)/N$ & $(P^3+NP)/M$ \\
\hline

Average execution time & 0.655ms & 0.350ms & 1.921ms & 111.921ms \\
\hline

\end{tabular}}
\end{center}
\end{table}

It can be seen that the execution time at the transmitter of CP-FBMA is nearly $3$ times higher than that of CP-OFDMA. Besides, the time taken by CP-FBMA to recover the signal is much longer than that by CP-OFDMA due to the matrix inversions in the process of equalization. This is the price that CP-FBMA pays to improve the system performance.

\subsection{Performance Analysis}
In this subsection, we discuss the latency, spectral efficiency and coherence time of the channel as they are correlated with each other. Here, latency can refer to two cases. One is the transmission delay of a certain information symbol, which is related to the length of filters. We will show that by optimizing the filters, the sum rate of CP-FBMA can be improved while reducing the filter length, thus reducing the transmission delay per symbol. The other latency refers to the decoding delay of a transmitted data block. In this regard, the proposed CP-FBMA requires the entire reception of the block of $N$ QAM symbols before decoding. Thus, it introduces a delay quantified by $(N+L_g)PT_s$, where $T_s$ is the sampling interval and other sources of delays such as channel delay or processing delay are ignored. Nevertheless, by using the multi-tap equalizer and maximum-likelihood sequence estimation, the proposed system also has the ability to start to decode as soon as the first symbols of the block are received (just as the conventional FBMA using per subband single tap equalization), but the receiving scheme presented in this paper does not support this.


As mentioned above, the required time of transmitting $MN$ QAM symbols is $(N+L_g)PT_s$. Hence, the spectral efficiency of CP-FBMA is $\frac{N\log_2K}{N+L_g}$ (bits/s/Hz), where $K$ is the QAM modulation order and $P$ is set as $M$. Under the same parameters, the spectral efficiency of conventional FBMA is $\log_2K$ (bits/s/Hz) due to no CP is inserted, which is higher than that of CP-FBMA. However, the length of CP depends on $N_f$ and $L_h$. When they are determined, the minimum length of CP can also be determined. If $N$ is relatively large, the portion of CP diminishes, and spectral efficiency of the two systems are approximately equal.

Lastly, we simply analyze the influence of coherent time of the channel on signal transmission. We will use some relatively specific parameters to explain what the size of the data block $N$ will be if its duration is within the coherence time, or how many data blocks can be transmitted within the coherence time if $N$ is given. Take extended pedestrian A (EPA) channel model in long term evolution (LTE) as an example. Assume the bandwidth $B=30.72$MHz, the coherence time of the channel $\tau_c=10$ms and the number of users $M=8$. The length of channel can be approximately calculated by $L_h={\tau}_{max}/T_s\approx 12$, where $\tau_{max}=410$ns is the maximum excess tap delay of EPA, and $T_s=1/B\approx 32.6$ns is the sampling interval. Thus, the length of CP can be evaluated by $L_g=\lceil (N_f+L_h-1)/P\rceil=6$, where $\lceil x\rceil$ denotes the minimum integer no smaller than $x$ and we assume the length of synthesis filters $N_f=4M$ and upsampling factor $P=M$. If the duration of the block of $N$ data symbols $(N+L_g)PT_s$ is within the coherence time $\tau_c$, it should satisfy $(N+L_g)PT_s\leq\tau_c$. Then we obtain $N\leq38394$, which means that approximately $38394$ symbols can be transmitted in the coherence time. On the other hand, assuming $N$ is fixed as $140$, the duration of the block is $\tau_d=(N+L_g)PT_s\approx 0.038$ms, and there are $\tau_c/\tau_d\approx 263$ blocks within the coherence time. Therefore, a sufficient amount of QAM symbols can be transmitted in the coherence time.

\section{Waveform Optimization On The Full Bandwidth}
In previous works, the synthesis filters of CP-FBMA system are band-pass filters in general. Each user occupies a subband, and there is little overlap with other subbands except for adjacent ones. The wide-banded design was originally proposed to reduce the filter length and complexity of symbol detection while improving PAPR and bit error ratio (BER) performance, as compared to the narrow-banded filter bank design. Related simulation results can be found in \cite{DZWW17,DGZW17,DQZW19}. To further improve the system performance, we extend the wide-banded design to full-band design in this paper, i.e., each filter is allowed to cover the full available bandwidth. In this case, overlap between subbands is much more severe, and the conventional filters without any optimization may suffer serious performance loss. Therefore, it is urgent to call for a new filter design method. In this section, we consider the waveform optimization on the full bandwidth to maximize the sum rate of CP-FBMA system. Firstly, we formulate the optimization problem and solve it by an outer iteration. In each iteration, the original problem leads to a suboptimization problem. Then, we propose a manifold-based gradient ascent algorithm to solve the suboptimization problem with inner iterations.

\subsection{Problem Formulation}
As indicated by \cite{DZWW17}, we will compute the achievable rate by observing $\boldsymbol{y}$ to avoid potential information loss. The signal model of (\ref{bmy}) is similar to that of a multi-user MIMO in uplink \cite[section 10.2]{DTPV05}. Therefore, the mutual information (with unit bits) by observing $\boldsymbol{y}$ could be computed similarly to that in \cite[section 10.2]{DTPV05}. Note that in our system this amount of information bits is transmitted in a total time of $(N+L_g)PT_s$ seconds and bandwidth of $1/T_s$ Hertz. Thus, the achievable sum rate of CP-FBMA (with unit bits/s/Hz) can be characterized by
\begin{align}\label{rate}
\sum_{m=1}^MR_m=\frac{1}{(N+L_g)P}\log_2\left|\mathbf{I}_{NP}+\frac{1}{N_0}\sum_{m=1}^M \mathbf{H}_m \mathbf{F}_m \mathbf{U} \mathbf{C}_m \mathbf{U}^T \mathbf{F}_m^H \mathbf{H}_m^H \right|,
\end{align}
where $\mathbf{C}_m\in \mathbb{C}^{N\times N}$ denotes the covariance matrix of the input symbol vector $\boldsymbol{x}_m$, which is defined as $\mathbf{C}_m\triangleq\mathbb{E}[\boldsymbol{x}_m\boldsymbol{x}_m^H]$. In a true MIMO system, the user wise average power constraint $P_m$ is imposed directly on $\boldsymbol{x}_m$ by $\frac{1}{N}\text{tr}(\mathbf{C}_m)=P_m$. Different from that, in CP-FBMA, the power constraint should be imposed on the SFB precoded symbol $\mathbf{F}_m\mathbf{U}\boldsymbol{x}_m$ through
\begin{align}\label{Cmconstraint}
\frac{1}{NP}{\rm tr}(\mathbb{E}[(\mathbf{F}_m\mathbf{U}\boldsymbol{x}_m)(\mathbf{F}_m\mathbf{U}\boldsymbol{x}_m)^H])
=\frac{1}{NP}{\rm tr}(\mathbf{F}_m \mathbf{U} \mathbf{C}_m \mathbf{U}^T \mathbf{F}_m^H)=P_m, m=1,...,M,
\end{align}
as $\mathbf{F}_m\mathbf{U}\boldsymbol{x}_m$ rather than $\boldsymbol{x}_m$ is the finally transmitted signal consuming that power. Considering that the optimal waveform occupies the full bandwidth, there is only energy constraint on the filters, i.e.,
\begin{align}
\boldsymbol{f}_m^H\boldsymbol{f}_m=1, m=1,...,M.
\end{align}
In general, the choice of $\mathbf{C}_m$ has impact on the optimal waveform $\boldsymbol{f}_m$. Therefore, they should be jointly optimized. We leave this joint optimization problem to Section \uppercase\expandafter{\romannumeral4}. In this section, we consider a simpler case, where $\mathbf{C}_m, m=1,...,M$, are identity matrices. We have the following lemma:
\begin{lemma}
For $\forall m\in\{1,...,M\}$, if $\mathbf{C}_m$ and $\boldsymbol{f}_m$ satisfy $\mathbf{C}_m=PP_m\mathbf{I}_N$ and $\boldsymbol{f}_m^H\boldsymbol{f}_m=1$, respectively, they also satisfy $\frac{1}{NP}{\rm tr}(\mathbf{F}_m \mathbf{U} \mathbf{C}_m \mathbf{U}^T \mathbf{F}_m^H)=P_m$.
\end{lemma}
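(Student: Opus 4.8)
The plan is to substitute the hypothesis $\mathbf{C}_m = PP_m\mathbf{I}_N$ directly into the trace expression and reduce everything to a single scalar multiple of a structural quantity that depends only on $\boldsymbol{f}_m$. First I would pull the scalar $PP_m$ out of the trace, so that $\frac{1}{NP}{\rm tr}(\mathbf{F}_m\mathbf{U}\mathbf{C}_m\mathbf{U}^T\mathbf{F}_m^H)$ collapses to $\frac{P_m}{N}{\rm tr}(\mathbf{F}_m\mathbf{U}\mathbf{U}^T\mathbf{F}_m^H)$, using $\frac{PP_m}{NP}=\frac{P_m}{N}$. It then suffices to show that ${\rm tr}(\mathbf{F}_m\mathbf{U}\mathbf{U}^T\mathbf{F}_m^H)=N$, which would give exactly $P_m$.

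To evaluate that trace, I would set $\mathbf{G}=\mathbf{F}_m\mathbf{U}\in\mathbb{C}^{NP\times N}$ and observe that ${\rm tr}(\mathbf{F}_m\mathbf{U}\mathbf{U}^T\mathbf{F}_m^H)={\rm tr}(\mathbf{G}\mathbf{G}^H)$ is precisely the sum of the squared two-norms of the columns of $\mathbf{G}$. The next step is to identify those columns explicitly: since the $n$th column of $\mathbf{U}$ (for $n=0,\dots,N-1$) is the standard basis vector $\boldsymbol{e}_{nP}$ of $\mathbb{R}^{NP}$, right-multiplication by $\mathbf{U}$ simply selects the columns of $\mathbf{F}_m$ indexed by $0,P,\dots,(N-1)P$. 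Hence $\mathbf{G}$ consists of $N$ columns taken from $\mathbf{F}_m$.

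The crux of the argument, and the step I expect to carry the real content, is the use of the circulant structure of $\mathbf{F}_m$. Its first column is the zero-padded version of $\boldsymbol{f}_m$, whose squared two-norm equals $\boldsymbol{f}_m^H\boldsymbol{f}_m=1$ by hypothesis (zero-padding leaves the norm unchanged); every other column is a cyclic shift of the first, i.e.\ a permutation of the same entries, so each of the $NP$ columns of $\mathbf{F}_m$ has unit two-norm. Consequently each of the $N$ selected columns contributes exactly $1$, giving ${\rm tr}(\mathbf{G}\mathbf{G}^H)=N$, and therefore $\frac{P_m}{N}{\rm tr}(\mathbf{F}_m\mathbf{U}\mathbf{U}^T\mathbf{F}_m^H)=\frac{P_m}{N}\cdot N=P_m$. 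The remaining checks are routine, namely that zero-padding preserves the norm and that the column-selection reading of $\mathbf{F}_m\mathbf{U}$ is correct; neither poses a genuine difficulty, so the only load-bearing observation is that circulance forces all columns to share the unit norm of $\boldsymbol{f}_m$.
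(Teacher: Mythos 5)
Your proof is correct, and it reaches the same intermediate target as the paper's Appendix~A --- both arguments first pull out the scalar $PP_m$ and reduce the claim to ${\rm tr}(\mathbf{F}_m\mathbf{U}\mathbf{U}^T\mathbf{F}_m^H)=N$ --- but the accounting of that trace is done differently. The paper partitions $\mathbf{F}_m\mathbf{U}$ into $N$ row-blocks $\mathbf{J}_{m,n}\in\mathbb{C}^{P\times N}$, writes out the nonzero entries of each block explicitly, and sums their squared magnitudes; to make that explicit display work it assumes $N_f=\delta M$ for a positive integer $\delta$. You instead read $\mathbf{F}_m\mathbf{U}$ column-wise: right-multiplication by $\mathbf{U}$ selects the columns of $\mathbf{F}_m$ indexed by $0,P,\dots,(N-1)P$, each column of a circulant matrix is a cyclic shift (hence a permutation of the entries) of the first column, and the first column is the zero-padded $\boldsymbol{f}_m$ with squared norm $1$; so the trace is a sum of $N$ unit column norms. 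Your route is slightly cleaner and marginally more general, since it uses only the circulant structure and needs no divisibility assumption on $N_f$; the paper's route has the minor virtue of exhibiting exactly where each filter coefficient lands in $\mathbf{F}_m\mathbf{U}$, a structure it reuses elsewhere (e.g.\ in the interleaving and block-diagonalization arguments). Both are complete; no gap in yours.
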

\begin{proof}
See Appendix A.
\end{proof}
For the convenience of expression, we define
\begin{align}
R_{sum}\triangleq \log_2\left|\mathbf{I}_{NP}+\sum_{m=1}^M \frac{1}{N_0}\mathbf{H}_m \mathbf{F}_m \mathbf{U} \mathbf{C}_m \mathbf{U}^T \mathbf{F}_m^H \mathbf{H}_m^H \right|
\end{align}
Thus, the waveform design problem can be formulated as
\begin{subequations}
\begin{align}\label{optproblem2}
({\rm P1}):\mathop{\max}_{\boldsymbol{f}_m,\forall m}&R_{sum}\\
\textit{s.t.}\quad &\boldsymbol{f}_m^H\boldsymbol{f}_m=1, m=1,...,M,
\end{align}
\end{subequations}
where $\mathbf{C}_m=PP_m\mathbf{I}_N, m=1,...,M$. The coefficient $\frac{1}{(N+L_g)P}$ in the objective function is omitted in analysis as it will not affect the solutions $\boldsymbol{f}_1^*,...,\boldsymbol{f}_M^*$, but it is considered during simulations as the actual value of $R_{sum}$ would be affected.

\subsection{Problem Simplification}
By diagonalizing circular matrices $\mathbf{H}_m$ and $\mathbf{F}_m$, the objective function of $({\rm P1})$ can be rewritten as
\begin{align}\label{optproblem3}
R_{sum}=\log_2\left|\mathbf{I}_{NP}+ \sum_{m=1}^M \mathbf{\Lambda}_{\boldsymbol{H}_m} \mathbf{\Lambda}_{\boldsymbol{F}_m} \mathbf{Q}_m \mathbf{\Lambda}_{\boldsymbol{F}_m}^H \mathbf{\Lambda}_{\boldsymbol{H}_m}^H \right|,
\end{align}
where $\mathbf{Q}_m\triangleq\frac{PP_m}{N_0}\mathbf{W}_{NP}\mathbf{U}\mathbf{U}^T\mathbf{W}_{NP}^H$ turns out to be a block matrix consisting of $P\times P$ identical subblocks, and each subblock is expressed as $\frac{PP_m}{N_0}\mathbf{I}_N$. $\mathbf{Q}_m$ can be transformed into the following form:
\begin{align}\label{Qm}
\mathbf{Q}_m={ \sum_{n=0}^{N-1} \boldsymbol{q}_{m,n} \boldsymbol{q}_{m,n}^T },
\end{align}
where $\boldsymbol{q}_{m,n}\in\mathbb{R}^{NP}$ is a sparse vector, with a single non-zero element $[\boldsymbol{q}_{m,n}]_{iN+n}=\sqrt{\frac{PP_m}{N_0}}$ for $i=0,...,P-1$. Therefore, (\ref{optproblem3}) can be further simplified as
\begin{align}\label{DFTF}
R_{sum}=\log_2\left|\mathbf{I}_{NP}+ \sum_{m=1}^M\sum_{n=0}^{N-1} \mathbf{\Lambda}_{\boldsymbol{H}_m} \mathbf{\Lambda}_{\boldsymbol{q}_{m,n}} \boldsymbol{F}_m \boldsymbol{F}_m^H \mathbf{\Lambda}_{\boldsymbol{q}_{m,n}} \mathbf{\Lambda}_{\boldsymbol{H}_m}^H \right|
\end{align}
where $\mathbf{\Lambda}_{\boldsymbol{q}_{m,n}}\in\mathbb{C}^{NP\times NP}$ is a diagonal matrix whose main diagonal is given by $\boldsymbol{q}_{m,n}$, $\boldsymbol{F}_m\in \mathbb{C}^{NP}$ is the $(NP)$-point DFT of $\boldsymbol{f}_m$. Note that $\boldsymbol{F}_m$ can be written as $\boldsymbol{F}_m=\sqrt{NP}\mathbf{W}_{NP}\mathbf{P}\boldsymbol{f}_m$, where $\mathbf{P}=\begin{bmatrix}\mathbf{I}_{N_f}&\mathbf{0}_{N_f\times (NP-N_f)}\end{bmatrix}^T$. Thus, (\ref{DFTF}) can be reformulated as
\begin{align}\label{optproblem4}
R_{sum}=\log_2\left|\mathbf{I}_{NP}+ \sum_{m=1}^M \sum_{n=0}^{N-1} \mathbf{G}_{m,n} \boldsymbol{f}_m \boldsymbol{f}_m^H \mathbf{G}_{m,n}^H \right|,
\end{align}
where $\mathbf{G}_{m,n}\triangleq\sqrt{NP}\mathbf{\Lambda}_{\boldsymbol{H}_m} \mathbf{\Lambda}_{\boldsymbol{q}_{m,n}}\mathbf{W}_{NP}\mathbf{P}\in\mathbb{C}^{NP\times N_f}$ is a sparse matrix with only $(iN+n)$th rows are nonzero, $i=0,...,P-1$.

It is difficult to maximize the sum rate in (\ref{optproblem4}) by optimizing the filters on all subbands simultaneously. With a similar structure, \cite{YRh04} showed that the sum-rate maximization problem could be solved efficiently by using an iterative algorithm. In each iteration, only the rate of a single user is maximized by treating other users' interference as known variables. When the sum rate converges, iteration stops. In other words, optimal $\boldsymbol{f}_m$ can be found iteratively by regarding other filters as fixed, and in each iteration it can be written as
\begin{align}\label{fm2}
\boldsymbol{f}_m^{*}=\mathop{\argmax}_{\boldsymbol{f}_m^H\boldsymbol{f}_m=1}
{\log_2\left| \sum_{n=0}^{N-1} \mathbf{G}_{m,n} \boldsymbol{f}_m \boldsymbol{f}_m^H \mathbf{G}_{m,n}^H +\mathbf{\Phi}_m \right|},
\end{align}
where $\mathbf{\Phi}_m=\mathbf{I}_{NP}+\sum_{i=1,i\neq m}^M \sum_{n=0}^{N-1} \mathbf{G}_{i,n} \boldsymbol{f}_i \boldsymbol{f}_i^H \mathbf{G}_{i,n}^H$ represents the multi-user interference contribution and is known and fixed. It can be easily proved that $\mathbf{\Phi}_m$ is a block matrix with $P\times P$ subblocks which are diagonal matrices of size $N\times N$. For further derivation, we have the following lemma:
\begin{lemma}
$\mathbf{\Phi}_m^{-1/2}$ has the same block matrix structure with $\mathbf{\Phi}_m$.
\end{lemma}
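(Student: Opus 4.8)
The plan is to exploit the interleaving permutation $\mathbf{\Omega}$ introduced in the complexity analysis, which converts any matrix whose $P\times P$ subblocks are themselves $N\times N$ diagonal matrices into a block diagonal matrix with $N$ subblocks of size $P\times P$. Since the text has already established that $\mathbf{\Phi}_m$ possesses exactly this structure (a $P\times P$ array of $N\times N$ diagonal blocks), the congruent matrix $\mathbf{D}\triangleq\mathbf{\Omega}\mathbf{\Phi}_m\mathbf{\Omega}^T$ is block diagonal with $N$ subblocks of size $P\times P$. Because $\mathbf{\Omega}$ is a permutation matrix it is real orthogonal, so from the stated identity $\mathbf{\Omega}^T\mathbf{\Omega}=\mathbf{I}_{NP}$ we also get $\mathbf{\Omega}\mathbf{\Omega}^T=\mathbf{I}_{NP}$ and hence $\mathbf{\Phi}_m=\mathbf{\Omega}^T\mathbf{D}\mathbf{\Omega}$.

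First I would verify that $\mathbf{\Phi}_m$ is Hermitian positive definite, so that the principal inverse square root $\mathbf{\Phi}_m^{-1/2}$ is well defined and unique. This is immediate from its definition $\mathbf{\Phi}_m=\mathbf{I}_{NP}+\sum_{i\neq m}\sum_{n}\mathbf{G}_{i,n}\boldsymbol{f}_i\boldsymbol{f}_i^H\mathbf{G}_{i,n}^H$: each summand has the form $\boldsymbol{v}\boldsymbol{v}^H$ with $\boldsymbol{v}=\mathbf{G}_{i,n}\boldsymbol{f}_i$ and is therefore positive semidefinite, while the identity term renders the whole sum positive definite. Consequently $\mathbf{D}=\mathbf{\Omega}\mathbf{\Phi}_m\mathbf{\Omega}^T$ is also Hermitian positive definite, and its principal inverse square root $\mathbf{D}^{-1/2}$ exists.

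The next step rests on two standard facts. The first is that the principal inverse square root of a block diagonal Hermitian positive definite matrix is computed block by block; hence $\mathbf{D}^{-1/2}$ is again block diagonal, with the same $N$ subblocks of size $P\times P$, each equal to the inverse square root of the corresponding block of $\mathbf{D}$. The second is the unitary invariance of matrix functions: for a unitary $\mathbf{\Omega}$ and any function of the spectrum, $f(\mathbf{\Omega}^T\mathbf{D}\mathbf{\Omega})=\mathbf{\Omega}^T f(\mathbf{D})\mathbf{\Omega}$. Applying this with $f(x)=x^{-1/2}$ gives $\mathbf{\Phi}_m^{-1/2}=\mathbf{\Omega}^T\mathbf{D}^{-1/2}\mathbf{\Omega}$.

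Finally, I would observe that the map $\mathbf{A}\mapsto\mathbf{\Omega}^T\mathbf{A}\mathbf{\Omega}$ is exactly the inverse of the structural transformation $\mathbf{A}\mapsto\mathbf{\Omega}\mathbf{A}\mathbf{\Omega}^T$; since $\mathbf{D}^{-1/2}$ belongs to the class of block diagonal matrices with $N$ subblocks of size $P\times P$, its image $\mathbf{\Omega}^T\mathbf{D}^{-1/2}\mathbf{\Omega}$ must belong to the class of matrices with $P\times P$ subblocks that are $N\times N$ diagonal, which is precisely the structure claimed for $\mathbf{\Phi}_m^{-1/2}$. I expect the main obstacle to lie in this last step: one should not merely assert that the permutation interchanges the two structural classes, but check it through the index bijection $\omega=\omega_1 P+\omega_2\leftrightarrow\omega_1+\omega_2 N$, confirming that $\mathbf{\Omega}$ and $\mathbf{\Omega}^T$ map the diagonal-subblock pattern and the block diagonal pattern onto one another bijectively. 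The positive-definiteness and unitary-invariance facts are routine, so the permutation bookkeeping carries the real content.
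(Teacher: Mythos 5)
Your proof is correct, and it takes a genuinely different route from the paper's. The paper proceeds by induction on the number of blocks: it writes out the block equations from $\mathbf{\Phi}\mathbf{\Phi}^{-1}=\mathbf{I}$, solves them explicitly for a $2\times 2$ block matrix, extends to $(R+1)\times(R+1)$ blocks assuming the result for $R\times R$, and then handles the square root separately by observing that $\mathbf{\Phi}\mathbf{\Phi}$ preserves the structure and invoking the method of undetermined coefficients. Your argument instead conjugates by the interleaving permutation $\mathbf{\Omega}$ (already introduced in the paper's complexity analysis) to turn $\mathbf{\Phi}_m$ into a block diagonal matrix $\mathbf{D}$ with $N$ blocks of size $P\times P$, applies the spectral calculus identity $f(\mathbf{\Omega}^T\mathbf{D}\mathbf{\Omega})=\mathbf{\Omega}^T f(\mathbf{D})\mathbf{\Omega}$ with $f(x)=x^{-1/2}$, and uses the fact that the principal inverse square root of a block diagonal positive definite matrix is computed blockwise. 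This is cleaner and arguably more rigorous on two counts: it treats the power $-1/2$ in one stroke rather than composing an inversion argument with a square-root argument, and it sidesteps the paper's implicit assumption in the base case that the off-diagonal blocks $\mathbf{\Lambda}_{1,2}$ are invertible, which need not hold. What the paper's computation buys in exchange is the explicit closed-form expressions \eqref{22result} for the blocks of the inverse, which your structural argument does not produce. Your identification of the permutation bookkeeping (the index bijection $\omega_1 P+\omega_2\leftrightarrow\omega_1+\omega_2 N$ exchanging the two structural classes) as the step that must actually be checked is apt; that verification is routine and goes through, so the proof is complete once it is written out.
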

\begin{proof}
See Appendix B.
\end{proof}

Thus, (\ref{fm2}) can be further reduced as
\begin{align}\label{fm3}
\boldsymbol{f}_m^{*}=\mathop{\argmax}_{\boldsymbol{f}_m^H\boldsymbol{f}_m=1}
{\log_2\left| \mathbf{I}_{NP}+ \sum_{n=0}^{N-1} \mathbf{\bar{G}}_{m,n} \boldsymbol{f}_m \boldsymbol{f}_m^H \mathbf{\bar{G}}_{m,n}^H \right|},
\end{align}
where $\mathbf{\bar{G}}_{m,n}\triangleq\mathbf{\Phi}_m^{-1/2}\mathbf{G}_{m,n}$ has the same matrix structure with $\mathbf{G}_{m,n}$ due to Lemma 2, i.e., only $(iN+n)$th rows are nonzero, $i=0,...,P-1$. This inspires us to remove the zero rows of $\mathbf{\bar{G}}_{m,n}$. By using the aforementioned interleaving matrix $\mathbf{\Omega}$, $\sum_{n=0}^{N-1} \mathbf{\bar{G}}_{m,n} \boldsymbol{f}_m \boldsymbol{f}_m^H \mathbf{\bar{G}}_{m,n}^H$ can be transformed to a block diagonal matrix, i.e., $\mathbf{\Omega}(\sum_{n=0}^{N-1} \mathbf{\bar{G}}_{m,n} \boldsymbol{f}_m \boldsymbol{f}_m^H \mathbf{\bar{G}}_{m,n}^H)\mathbf{\Omega}^T$ is a block diagonal matrix whose $n$th main diagonal subblock is $\mathbf{\tilde{G}}_{m,n} \boldsymbol{f}_m \boldsymbol{f}_m^H \mathbf{\tilde{G}}_{m,n}^H$, where $\mathbf{\tilde{G}}_{m,n}\in \mathbb{C}^{P\times N_f}$ evolves from $\mathbf{\bar{G}}_{m,n}$ by removing its zero rows. Thus, (\ref{fm3}) can be reformulated as
\begin{align}\label{fm4}
\boldsymbol{f}_m^{*}&=\mathop{\argmax}_{\boldsymbol{f}_m^H\boldsymbol{f}_m=1}
{\sum_{n=0}^{N-1}\log_2\left| \mathbf{I}_P+ \mathbf{\tilde{G}}_{m,n} \boldsymbol{f}_m \boldsymbol{f}_m^H \mathbf{\tilde{G}}_{m,n}^H \right|}
=\mathop{\argmax}_{\boldsymbol{f}_m^H\boldsymbol{f}_m=1}
{\sum_{n=0}^{N-1}\log_2\left(1+ \boldsymbol{f}_m^H \mathbf{B}_{m,n} \boldsymbol{f}_m \right)},
\end{align}
where $\mathbf{B}_{m,n}\triangleq\mathbf{\tilde{G}}_{m,n}^H\mathbf{\tilde{G}}_{m,n}\in\mathbb{C}^{N_f\times N_f}$ is a Hermitian matrix. Now, the original problem $({\rm P1})$ is transferred to the following suboptimization problem in each outer iteration:
\begin{subequations}
\begin{align}\label{optproblem2-sub}
({\rm P1-1}):\mathop{\max}_{\boldsymbol{f}_m}\quad&\bar{R}_m(\boldsymbol{f}_m)= \sum_{n=0}^{N-1}\log_2(1+\boldsymbol{f}_m^H\mathbf{B}_{m,n}\boldsymbol{f}_m)\\
\textit{s.t.}\quad &\boldsymbol{f}_m^H\boldsymbol{f}_m=1.
\end{align}
\end{subequations}

\subsection{Manifold-based Algorithm}
Problem $({\rm P1-1})$ is not convex as it maximizes a convex objective function with an equality constraint of waveform energy. Here, we use a manifold-based gradient ascent algorithm to find a near-optimal $\boldsymbol{f}_m$ \cite{QDZW19}. The manifold-based algorithm uses an iterative method to get the current solution. Different from the traditional gradient ascent algorithm, the solution $\boldsymbol{f}_m^{(t+1)}$ after the $(t+1)$th iteration in the manifold-based algorithm is written as
\begin{align}\label{update}
\boldsymbol{f}_m^{(t+1)}=\mathcal{R}_{\boldsymbol{f}_m^{(t)}}(\rho^{(t)}{\rm grad}_{\bar{R}_m}^{(t)})
=\frac{\boldsymbol{f}_m^{(t)}+\rho^{(t)}{\rm grad}_{\bar{R}_m}^{(t)}}
{\Vert \boldsymbol{f}_m^{(t)}+\rho^{(t)}{\rm grad}_{\bar{R}_m}^{(t)}\Vert}
\end{align}
where $\rho$ is the stepsize, ${\rm grad}_{\bar{R}_m}\in\mathbb{C}^{N_f}$ is the gradient of $\bar{R}_m$ on the manifold $\mathcal{M}=\{\boldsymbol{f}_m\vert\boldsymbol{f}_m^H\boldsymbol{f}_m=1\}$, which can be regarded as the projection of ${\nabla}_{\bar{R}_m}$ on the tangent space $T_{\boldsymbol{f}_m}(\mathcal{M})=\{\boldsymbol{\zeta}\vert{\boldsymbol{\zeta}}^H\boldsymbol{f}_m+\boldsymbol{f}_m^H\boldsymbol{\zeta}=0\}$ at $\boldsymbol{f}_m$, where ${\nabla}_{\bar{R}_m}\in\mathbb{C}^{N_f}$ is the gradient of $\bar{R}_m$ and can be calculated as
\begin{align}\label{nabla}
{\nabla}_{\bar{R}_m}={\frac{1}{\ln2}}{\sum_{n=0}^{N-1} \frac{\mathbf{B}_{m,n}\boldsymbol{f}_m}{1+\boldsymbol{f}_m^H\mathbf{B}_{m,n}\boldsymbol{f}_m}}.
\end{align}
\begin{proposition}
The closed form of ${\rm grad}_{\bar{R}_m}$ is ${\rm grad}_{\bar{R}_m}={\nabla}_{\bar{R}_m}-\mathfrak{R}[{({\nabla}_{\bar{R}_m})}^H\boldsymbol{f}_m]\boldsymbol{f}_m$.
\end{proposition}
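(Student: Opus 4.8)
The plan is to recognize that $\mathcal{M}=\{\boldsymbol{f}_m\mid\boldsymbol{f}_m^H\boldsymbol{f}_m=1\}$ is the complex unit sphere in $\mathbb{C}^{N_f}$, regarded as a real Riemannian submanifold of $\mathbb{R}^{2N_f}$ equipped with the real inner product $\langle\boldsymbol{a},\boldsymbol{b}\rangle\triangleq\mathfrak{R}[\boldsymbol{a}^H\boldsymbol{b}]$. Under this metric the Riemannian gradient ${\rm grad}_{\bar{R}_m}$ is, by definition, the orthogonal projection of the Euclidean gradient ${\nabla}_{\bar{R}_m}$ of (\ref{nabla}) onto the tangent space $T_{\boldsymbol{f}_m}(\mathcal{M})$. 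Hence the whole task reduces to computing that projection explicitly.

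First I would rewrite the defining condition of the tangent space. The relation $\boldsymbol{\zeta}^H\boldsymbol{f}_m+\boldsymbol{f}_m^H\boldsymbol{\zeta}=0$ is exactly $2\mathfrak{R}[\boldsymbol{f}_m^H\boldsymbol{\zeta}]=0$, so $T_{\boldsymbol{f}_m}(\mathcal{M})=\{\boldsymbol{\zeta}\mid\langle\boldsymbol{f}_m,\boldsymbol{\zeta}\rangle=0\}$ is precisely the real-orthogonal complement of the one-dimensional real line $\mathbb{R}\boldsymbol{f}_m$. Thus the normal space at $\boldsymbol{f}_m$ is spanned by $\boldsymbol{f}_m$, and, crucially, $\langle\boldsymbol{f}_m,\boldsymbol{f}_m\rangle=\mathfrak{R}[\boldsymbol{f}_m^H\boldsymbol{f}_m]=\boldsymbol{f}_m^H\boldsymbol{f}_m=1$ on $\mathcal{M}$, so $\boldsymbol{f}_m$ is already a unit normal.

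Next, the orthogonal projection onto $T_{\boldsymbol{f}_m}(\mathcal{M})$ is the identity minus the projection onto the unit normal $\boldsymbol{f}_m$, which yields
\begin{align}
{\rm grad}_{\bar{R}_m}={\nabla}_{\bar{R}_m}-\langle{\nabla}_{\bar{R}_m},\boldsymbol{f}_m\rangle\,\boldsymbol{f}_m
={\nabla}_{\bar{R}_m}-\mathfrak{R}[{({\nabla}_{\bar{R}_m})}^H\boldsymbol{f}_m]\,\boldsymbol{f}_m,
\end{align}
where I used $\langle{\nabla}_{\bar{R}_m},\boldsymbol{f}_m\rangle=\mathfrak{R}[({\nabla}_{\bar{R}_m})^H\boldsymbol{f}_m]$. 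This is exactly the claimed expression. To close the argument I would verify the two defining properties of an orthogonal projection: that the result lies in the tangent space, and that the removed part lies in the normal space. Writing $c=\mathfrak{R}[({\nabla}_{\bar{R}_m})^H\boldsymbol{f}_m]\in\mathbb{R}$ and $\boldsymbol{g}={\nabla}_{\bar{R}_m}-c\boldsymbol{f}_m$, a direct computation gives $\boldsymbol{g}^H\boldsymbol{f}_m+\boldsymbol{f}_m^H\boldsymbol{g}=2\mathfrak{R}[({\nabla}_{\bar{R}_m})^H\boldsymbol{f}_m]-2c\,\boldsymbol{f}_m^H\boldsymbol{f}_m=2c-2c=0$, so $\boldsymbol{g}\in T_{\boldsymbol{f}_m}(\mathcal{M})$; and $c\boldsymbol{f}_m$ is a real multiple of $\boldsymbol{f}_m$, hence normal. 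Since the decomposition of ${\nabla}_{\bar{R}_m}$ into tangential and normal parts is unique, $\boldsymbol{g}$ is the sought projection.

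The main obstacle is conceptual rather than computational: one must consistently treat $\mathbb{C}^{N_f}$ as a \emph{real} inner-product space with metric $\mathfrak{R}[\boldsymbol{a}^H\boldsymbol{b}]$ throughout, which is exactly what forces the operator $\mathfrak{R}[\cdot]$ to appear in the final formula. Were one to project using the Hermitian inner product $\boldsymbol{a}^H\boldsymbol{b}$ instead, the coefficient would be $({\nabla}_{\bar{R}_m})^H\boldsymbol{f}_m$ without the real part, and the resulting vector would in general fail the real tangency condition $2\mathfrak{R}[\boldsymbol{f}_m^H\boldsymbol{\zeta}]=0$. Keeping this distinction straight is the only delicate point, and the tangency check above is precisely what guards against it.
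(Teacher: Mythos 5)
Your proof is correct and reaches the same formula as the paper, but by a different route. The paper defines ${\rm grad}_{\bar{R}_m}$ as the constrained least-squares problem $\mathop{\rm argmin}_{\boldsymbol{\zeta}\in T_{\boldsymbol{f}_m}(\mathcal{M})}\Vert\boldsymbol{\zeta}-{\nabla}_{\bar{R}_m}\Vert^2$ and solves it mechanically with a Lagrange multiplier: stationarity gives $\boldsymbol{\zeta}={\nabla}_{\bar{R}_m}+\lambda\boldsymbol{f}_m$, and substituting into the tangency constraint yields $\lambda=-\mathfrak{R}[({\nabla}_{\bar{R}_m})^H\boldsymbol{f}_m]$. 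You instead identify the tangent space as the real-orthogonal complement of $\mathbb{R}\boldsymbol{f}_m$ under the metric $\mathfrak{R}[\boldsymbol{a}^H\boldsymbol{b}]$, note that $\boldsymbol{f}_m$ is a unit normal, subtract the normal component directly, and then verify tangency and normality together with uniqueness of the decomposition. The two arguments are computing the same orthogonal projection; yours makes the geometric reason for the $\mathfrak{R}[\cdot]$ explicit (the constraint function is real-valued, so the relevant inner product is the real one) and sidesteps a small gap in the Lagrangian route, namely that one must observe the multiplier $\lambda$ is real for the stationarity equation $\boldsymbol{\zeta}-{\nabla}_{\bar{R}_m}-\lambda\boldsymbol{f}_m=0$ to be consistent with the solution obtained. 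The paper's derivation is more routine and requires no Riemannian vocabulary; yours is shorter once the normal space is identified and generalizes immediately to any submanifold cut out by a single real constraint.
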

\begin{proof}
See Appendix C.
\end{proof}

At the $(t+1)$th iteration, the stepsize $\rho^{(t)}$ in (\ref{update}) can be chosen approximately by backtracking line search \cite{SBLV04}. When ${\Vert{\rm grad}_{\bar{R}_m}\Vert}^2$ is smaller than $\epsilon$ which is a small constant, iteration stops. Besides, $\boldsymbol{f}_m$ after the final iteration is taken as the solution to Problem $({\rm P1-1})$.

For clarity, we summarize the above waveform optimization algorithm on the full bandwidth for solving $({\rm P1})$ in Algorithm \ref{waveopt1}.
\begin{algorithm}
	\caption{Waveform Optimization Algorithm on The Full Bandwidth for Solving $({\rm P1})$}
    \renewcommand{\algorithmicrequire}{\textbf{Input:}}
    \renewcommand{\algorithmicensure}{\textbf{Output:}}
	\label{waveopt1}
	\begin{algorithmic}[1]
        \REQUIRE $M$, $N$, $P$, $N_0$, $P_m$, $\boldsymbol{h}_m, m=1,...,M$
        \ENSURE Optimal $\boldsymbol{f}_m, m=1,...,M$
        \STATE Initialize $\boldsymbol{f}_m$ and construct $\mathbf{U}$, $\mathbf{H}_m$ and $\mathbf{F}_m, m=1,...,M$.
        \REPEAT
        \FOR {$m=1$ to $M$}
		\STATE Compute $\mathbf{B}_{m,n}, n=0,...,N-1$ in (\ref{fm4}), and set a small constant $\epsilon$.
		\REPEAT
        \STATE Compute ${\rm grad}_{\bar{R}_m}^{(t)}$ at $\boldsymbol{f}_m^{(t)}$ by ${\rm grad}_{\bar{R}_m}^{(t)}={\nabla}_{\bar{R}_m}^{(t)}-\mathfrak{R}[{({\nabla}_{\bar{R}_m}^{(t)})}^H\boldsymbol{f}_m^{(t)}]\boldsymbol{f}_m^{(t)}$.
        \STATE Choose $\rho^{(t)}$ by backtracking line search.
        \STATE Set $\boldsymbol{f}_m^{(t+1)}=\mathcal{R}_{\boldsymbol{f}_m^{(t)}}(\rho^{(t)}{\rm grad}_{\bar{R}_m}^{(t)})$.
        \UNTIL {$\Vert{\rm grad}_{\bar{R}_m}\Vert^2\leq\epsilon$}
        \ENDFOR
        \UNTIL {the sum rate converges}
	\end{algorithmic}
\setlength{\belowdisplayskip}{3pt}
\end{algorithm}

\subsection{Complexity Analysis}
In this subsection we analyze the complexity of Algorithm \ref{waveopt1}. As assumed in Section II, there are $MN$ QAM symbols to be transmitted in total. In a single outer iteration, the main computational cost before the inner iteration (from Step 5 to Step 9) comes from Step 4, which involves the computations of matrix $\mathbf{G}_{m,n}$, $\mathbf{\Phi}_m^{-1}$, $\mathbf{\bar{G}}_{m,n}$ and $\mathbf{B}_{m,n}$. The complexity of these computations are $\mathcal{O}(MNPN_f)$, $\mathcal{O}(MNPN_f+MNP^3)$, $\mathcal{O}(MNP^2N_f)$ and $\mathcal{O}(MNPN_f^2)$, respectively. Fortunately, by using the Woodbury identity \cite{GHCV96}, i.e., $(\mathbf{A}+\boldsymbol{b}\boldsymbol{b}^H)^{-1}=\mathbf{A}^{-1}-\mathbf{A}^{-1} \boldsymbol{b}(1+\boldsymbol{b}^H\mathbf{A}^{-1}\boldsymbol{b})^{-1} \boldsymbol{b}^H\mathbf{A}^{-1}$, the complexity of computing $\mathbf{\Phi}_m^{-1}$ can be reduced. For example, $\mathbf{\Phi}_2=\mathbf{\Phi}_1+\sum_{n=0}^{N-1}\mathbf{G}_{1,n}\boldsymbol{f}_1\boldsymbol{f}_1^H \mathbf{G}_{1,n}^H-\sum_{n=0}^{N-1}\mathbf{G}_{2,n}\boldsymbol{f}_2\boldsymbol{f}_2^H \mathbf{G}_{2,n}^H$, where $\mathbf{\Phi}_1^{-1}$ is known and $\boldsymbol{f}_1$ has been optimized but $\boldsymbol{f}_2$ has not. When computing $\mathbf{\Phi}_2^{-1}$, we can firstly compute  $(\mathbf{\Phi}_1+\mathbf{G}_{1,0}\boldsymbol{f}_1\boldsymbol{f}_1^H \mathbf{G}_{1,0}^H)^{-1}$ by $(\mathbf{\Phi}_1+\mathbf{G}_{1,0}\boldsymbol{f}_1\boldsymbol{f}_1^H\mathbf{G}_{1,0}^H)^{-1}= \mathbf{\Phi}_1^{-1}-\mathbf{\Phi}_1^{-1} \mathbf{G}_{1,0}\boldsymbol{f}_1(1+\boldsymbol{f}_1^H\mathbf{G}_{1,0}^H\mathbf{\Phi}_1^{-1} \mathbf{G}_{1,0}\boldsymbol{f}_1)^{-1} \boldsymbol{f}_1^H\mathbf{G}_{1,0}^H\mathbf{\Phi}_1^{-1}$, then compute $(\mathbf{\Phi}_1+\sum_{n=0}^1\mathbf{G}_{1,n}\boldsymbol{f}_1\boldsymbol{f}_1^H \mathbf{G}_{1,n}^H)^{-1}$ according to $(\mathbf{\Phi}_1+\mathbf{G}_{1,0}\boldsymbol{f}_1\boldsymbol{f}_1^H \mathbf{G}_{1,0}^H)^{-1}$ until $\mathbf{\Phi}_2^{-1}$ is obtained. Thus, the complexity of computing $\mathbf{\Phi}_m^{-1}$ is reduced to $\mathcal{O}(MNPN_f+MNP^2)$. Totally, the complexity of Step 4 is dominantly given by $\mathcal{O}(MNP^2N_f+MNPN_f^2)$ which is relatively deterministic.

The strict convergence analysis of the manifold-based algorithm may still remain an open problem. Thus, we turn to analyzing the computational cost within a single inner iteration where the dominated cost comes from Step 7. It involves matrix products, and the complexity is $\mathcal{O}(T_{se}MNN_f^2)$, where $T_{se}$ is the needed number of iterations of the backtracking line search. Therefore, the normalized complexity of the proposed algorithm per QAM symbol is $\mathcal{O}(T_{out}(P^2N_f+PN_f^2+T_{in}T_{se}N_f^2))$, where $T_{out}$ and $T_{in}$ are the total needed numbers of outer and inner iterations, respectively. The exact values of $T_{in}$ and $T_{se}$ are largely determined by $\epsilon$ and the initial value of $\rho^{(t)}$ (we denote the initial value as $\rho_0^{(t)}$), respectively. We observe in our simulations that when $\epsilon=1$ and $\rho_0^{(t)}=0.01$, the average value of $T_{in}$ is about $7$ and $T_{se}$ is about $1$. In this case, about $92.63\%$ execution time of Algorithm 1 is spent on Step 4 and about $0.85\%$ on Step 7. When $\epsilon=10^{-5}$ and $\rho_0^{(t)}=0.01$, the average value of $T_{in}$ is about $152$ and $T_{se}$ is still about $1$. In this case, about $73.59\%$ execution time is spent on Step 4 and about $12.09\%$ on Step 7. Although the portion of the execution time of Step 7 increases when $\epsilon=10^{-5}$, the improved sum rate is only $0.23\%$ higher than that when $\epsilon=1$. Therefore, $\epsilon$ can be set relatively large in practical applications, which means the number of inner iterations $T_{in}$ can be relatively small and thus leads to a lower complexity. Besides, it can be seen that the normalized complexity is independent of $N$ and largely depends on $P$ and $N_f$, where $P\leq M$. Due to the usage of wide-banded subbands for multi-user access, the number of subbands $M$ is relatively small. Moreover, by using the optimal waveform, even shorter filters can present high performance, which will be shown in simulations.

Regarding the signaling exchange process, the proposed optimization scheme can be implemented in such a procedure: users send pilot signals to the base station (BS), and BS estimates the channel impulse response according to the pilot signals and designs filters. Then the time domain coefficients of the designed filters are fed back to each user through the downlink control channel. As the computation is carried out in BS, the complexity would be more affordable, and more simple waveform optimization algorithm can be further studied in future works,

\begin{figure}
\setlength{\abovecaptionskip}{-0.3cm}
\centering
\includegraphics[width=3in]{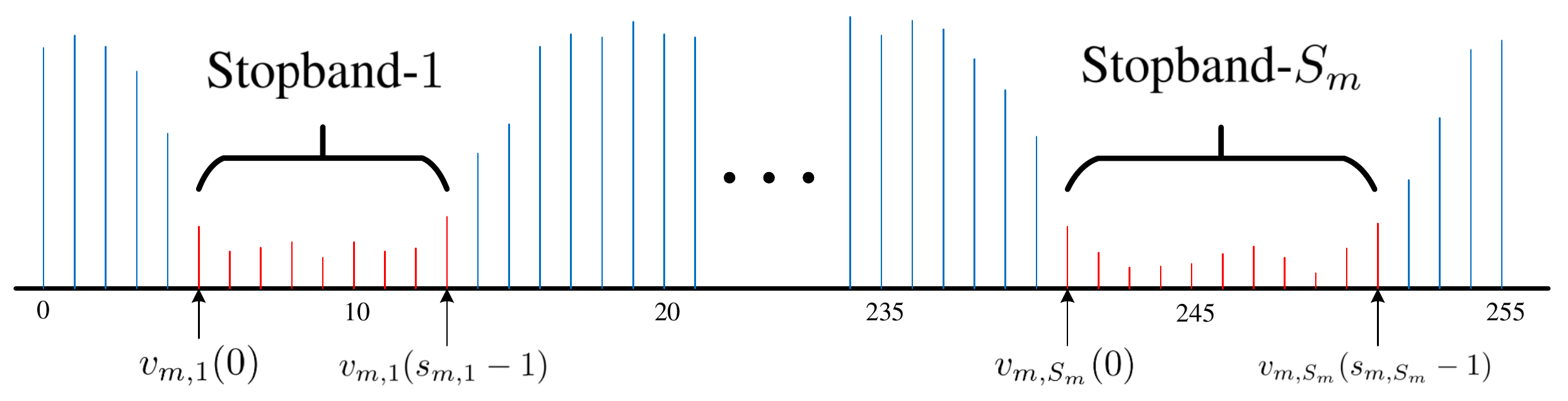}
\caption{The schematic diagram of $256$-point DFT of $\boldsymbol{f}_m$.}
\label{FFT}
\end{figure}

\section{Joint Waveform and Covariance Matrix Optimization}
In the above section, the covariance matrices $\mathbf{C}_m, m=1,...,M$, are chosen as fixed identity matrices. Although this assumption simplifies the analysis, it is not optimal in the sense that, the optimal covariance matrices have mutual dependence with the choice of the filters and are in general not identity ones. Besides, there will be constraints on the spectrum resources occupied by filters in practical applications. For instance, in cognitive radio, the transceiver is required to be capable of utilizing all available bandwidth resources which may consist of several isolated band fragments. Requirements on the spectrum occupation is transformed into per user stopband energy constraints in this paper. To fully reveal the performance gain of non-orthogonal CP-FBMA design, in this section we investigate the joint waveform and covariance matrices optimization to maximize the sum rate, and take the constraints on the stopband energy of filters into consideration.

\subsection{Stopband Energy Constraint}
In this subsection, we transform the spectrum occupation requirements into stopband energy constraints. For user-$m$, suppose there are $S_m$ stopbands and the number of frequency bins on the $i$th stopband is $s_{m,i}$. $\boldsymbol{v}_{m,i}=[v_{m,i}(0),...,v_{m,i}(s_{m,i}-1)]$ is a row vector where $v_{m,i}(n), n=0,...,s_{m,i}-1$ denote the positions of these frequency bins in $(NP)$-point DFT of $\boldsymbol{f}_m$. For clarity, we take the schematic diagram of $256$-point DFT of $\boldsymbol{f}_m$ as an example, which is shown in Figure \ref{FFT}. Define matrix $\mathbf{V}_{m,i}\in \mathbb{R}^{s_{m,i}\times NP}$ whose elements are all zeros except for those on the $n$th row and $v_{m,i}(n)$th column, which are set as $1, n=0,...,s_{m,i}-1$. Thus, the frequency sequence of $\boldsymbol{f}_m$ on the $i$th stopband can be expressed as $\boldsymbol{\bar{F}}_{m,i}=\mathbf{V}_{m,i}\mathbf{W}_{NP}\mathbf{P}\boldsymbol{f}_m$. The energy of the $i$th stopband is $\boldsymbol{\bar{F}}_{m,i}^H\boldsymbol{\bar{F}}_{m,i}$, which is subject to a constraint that it is less than a constant $e_{m,i}$. Therefore, the stopband energy constraints for user-$m$ are written as $\boldsymbol{f}_m^H\mathbf{E}_{m,i}\boldsymbol{f}_m\leq e_{m,i}, i=1,...,S_m$, where $\mathbf{E}_{m,i}\triangleq\mathbf{P}^T\mathbf{W}_{NP}^H\mathbf{V}_{m,i}^T
\mathbf{V}_{m,i}\mathbf{W}_{NP}\mathbf{P}$ is a positive definite conjugate symmetric matrix. Thus, the joint optimization problem with stopband energy constraints can be formulated as

\begin{subequations}
\begin{align}\label{optproblem5}
({\rm P2}):\mathop{\max}_{\boldsymbol{f}_m,\mathbf{C}_m,\forall m}
&R_{sum}\\\label{optproblem5.1}
\textit{s.t.}\quad &\frac{1}{NP}{\rm tr}(\mathbf{F}_m \mathbf{U} \mathbf{C}_m \mathbf{U}^T \mathbf{F}_m^H)=P_m,\\\label{optproblem5.2}
&\boldsymbol{f}_m^H\mathbf{E}_{m,i}\boldsymbol{f}_m\leq e_{m,i}, i=1,...,S_m,\\\label{optproblem5.3}
&\boldsymbol{f}_m^H\boldsymbol{f}_m=1, m=1,...,M,
\end{align}
\end{subequations}

As mentioned in Section \uppercase\expandafter{\romannumeral3}, optimizing the covariance matrices and filters on all subbands simultaneously is impracticable. Optimal $\mathbf{C}_m$ and $\boldsymbol{f}_m, m=1,...,M$, should be found iteratively until the sum rate converges. For user-$m$ in each iteration, its $\mathbf{C}_m$ and $\boldsymbol{f}_m$ are alternately optimized by regarding other users' as known and fixed, where the alternating optimization refers to the procedure that fixing $\mathbf{C}_m$ to optimize $\boldsymbol{f}_m$, and then fixing $\boldsymbol{f}_m$ to optimize $\mathbf{C}_m$. Therefore, in this procedure, Problem $({\rm P2})$ leads to two associated suboptimization problems which will be shown in the following subsections.

\subsection{Covariance Matrix Optimization Given $\boldsymbol{f}_m$}
In this subsection, we consider the algorithm of optimizing the covariance matrix $\mathbf{C}_m$ in each iteration with the assumption that $\boldsymbol{f}_m, m=1,...,M$, are known and fixed. Without loss of generality, we initialize $\mathbf{C}_m=PP_m\mathbf{I}_N, m=1,...,M$, to satisfy the constraint (\ref{optproblem5.2}). Thus, the first suboptimization problem is written as
\begin{align}\label{subproblem1}
({\rm P2-1}):\mathop{\max}_{\mathbf{C}_m}\quad
&R_{sum}\\
\nonumber
\textit{s.t.}\quad &(\ref{optproblem5.1}).
\end{align}
Similar to (\ref{optproblem3}), the objective function of $({\rm P2-1})$ can be rewritten as
\begin{align}
\log_2\left|\mathbf{\Phi}_m+\frac{1}{N_0} \mathbf{\Lambda}_{\boldsymbol{H}_m} \mathbf{\Lambda}_{\boldsymbol{F}_m} \mathbf{W}_{NP} \mathbf{U} \mathbf{C}_m \mathbf{U}^T \mathbf{W}_{NP}^T \mathbf{\Lambda}_{\boldsymbol{F}_m}^H \mathbf{\Lambda}_{\boldsymbol{H}_m}^H \right|.
\end{align}
Define $\mathbf{M}_m\triangleq\frac{1}{\sqrt{N_0}}\mathbf{\Phi}_m^{-1/2} \mathbf{\Lambda}_{\boldsymbol{H}_m} \mathbf{\Lambda}_{\boldsymbol{F}_m} \mathbf{W}_{NP} \mathbf{U}$, $\mathbf{N}_m\triangleq \mathbf{\Lambda}_{\boldsymbol{F}_m} \mathbf{W}_{NP} \mathbf{U}$, and their GSVD can be expressed as
\begin{align}\label{GSVD}
\mathbf{M}_m=\mathbf{V}_{\mathbf{M}_m}
\mathbf{\Lambda}_{\mathbf{M}_m}\mathbf{X}_m^H,\quad
\mathbf{N}_m=\mathbf{V}_{\mathbf{N}_m}
\mathbf{\Lambda}_{\mathbf{N}_m}\mathbf{X}_m^H,
\end{align}
where $\mathbf{V}_{\mathbf{M}_m}$, $\mathbf{V}_{\mathbf{N}_m}\in\mathbb{C}^{NP\times NP}$ are the left unitary matrices, $\mathbf{\Lambda}_{\mathbf{M}_m}$, $\mathbf{\Lambda}_{\mathbf{N}_m}\in\mathbb{C}^{NP\times N}$ are diagonal matrices, and $\mathbf{X}_m\in \mathbb{C}^{N\times N}$ is an invertible matrix whose columns are orthogonal. Therefore, Problem $({\rm P2-1})$ can be transformed into
\begin{subequations}
\begin{align}\label{subproblem12}
({\rm P2-1'}):\mathop{\max}_{\mathbf{S}_m}\quad
&\log_2\left|\mathbf{I}_{NP}+\mathbf{\Lambda}_{\mathbf{M}_m}
\mathbf{S}_m \mathbf{\Lambda}_{\mathbf{M}_m}^H \right|\\
\textit{s.t.}\quad &{\rm tr}(\mathbf{\Lambda}_{\mathbf{N}_m}\mathbf{S}_m
\mathbf{\Lambda}_{\mathbf{N}_m}^H)=NPP_m,
\end{align}
\end{subequations}
where matrix $\mathbf{S}_m\triangleq\mathbf{X}_m^H\mathbf{C}_m\mathbf{X}_m$ is the new optimization variable. According to Hadamard's inequality \cite{TCJT91}, the optimal $\mathbf{S}_m$ is a diagonal matrix and can be obtained by water-filling algorithm. Then, the optimal covariance matrix is constructed by $\mathbf{C}_m^*=(\mathbf{X}_m^H)^{-1}\mathbf{S}_m^*\mathbf{X}_m^{-1}$.

\subsection{Waveform Optimization Given $\mathbf{C}_m$}
After $\mathbf{C}_m$ is optimized, constraints (\ref{optproblem5.2}) and (\ref{optproblem5.3}) can not be combined together.
With the general known covariance matrix $\mathbf{C}_m$, the waveform optimization problem for $\boldsymbol{f}_m$ is written as
\begin{align}\label{optproblem6}
({\rm P2-2}):\mathop{\max}_{\boldsymbol{f}_m}\quad &R_{sum}\\
\nonumber
\textit{s.t.}\quad &(\ref{optproblem5.1}), (\ref{optproblem5.2}), (\ref{optproblem5.3}).
\end{align}
\begin{lemma} With any optimized covariance matrix $\mathbf{C}_m^*$, $\mathbf{Q}_m=\frac{1}{N_0} \mathbf{W}_{NP} \mathbf{U} \mathbf{C}_m^* \mathbf{U}^T \mathbf{W}_{NP}^H$ is still a block matrix consisting of $P\times P$ subblocks, and each subblock is a diagonal matrix.
\end{lemma}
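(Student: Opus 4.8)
The plan is to first reduce the claimed block structure of $\mathbf{Q}_m$ to a statement purely about $\mathbf{C}_m^*$, and then to show that the covariance matrix produced by the GSVD-plus-water-filling procedure of Problem $({\rm P2-1})$ is circulant. The starting observation is the factorization $\mathbf{W}_{NP}\mathbf{U}=\tfrac{1}{\sqrt{P}}(\mathbf{1}_P\otimes\mathbf{W}_N)$, where $\mathbf{1}_P$ is the all-ones column vector of length $P$ and $\otimes$ denotes the Kronecker product; this follows by a direct entrywise computation, since the $(nP)$th row of $\mathbf{U}$ is the $n$th standard basis vector and $e^{-2\pi\mathrm{i}(pN+r)n/N}=e^{-2\pi\mathrm{i}rn/N}$ once the $NP$ rows are indexed as $k=pN+r$. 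Substituting this into the definition $\mathbf{Q}_m=\tfrac{1}{N_0}\mathbf{W}_{NP}\mathbf{U}\mathbf{C}_m^*\mathbf{U}^T\mathbf{W}_{NP}^H$ yields $\mathbf{Q}_m=\tfrac{1}{N_0P}(\mathbf{1}_P\mathbf{1}_P^T)\otimes(\mathbf{W}_N\mathbf{C}_m^*\mathbf{W}_N^H)$. This expression already exhibits the $P\times P$ block structure, with all subblocks equal to $\tfrac{1}{N_0P}\mathbf{W}_N\mathbf{C}_m^*\mathbf{W}_N^H$, so the whole lemma reduces to showing that this common $N\times N$ block is diagonal, i.e.\ that $\mathbf{C}_m^*$ is circulant.

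Next I would establish circularity of $\mathbf{C}_m^*$ by exploiting the structure of the GSVD in (\ref{GSVD}). Writing $\mathbf{M}_m^H\mathbf{M}_m=\tfrac{1}{N_0}(\mathbf{W}_{NP}\mathbf{U})^H\mathbf{K}(\mathbf{W}_{NP}\mathbf{U})$ with $\mathbf{K}=\mathbf{\Lambda}_{\boldsymbol{F}_m}^H\mathbf{\Lambda}_{\boldsymbol{H}_m}^H\mathbf{\Phi}_m^{-1}\mathbf{\Lambda}_{\boldsymbol{H}_m}\mathbf{\Lambda}_{\boldsymbol{F}_m}$ (using that $\mathbf{\Phi}_m^{-1/2}$ is Hermitian), I note that by Lemma 2 the matrix $\mathbf{\Phi}_m^{-1}$, and hence $\mathbf{K}$, is a $P\times P$ block matrix whose $N\times N$ subblocks are all diagonal; equivalently $\mathbf{K}=\sum_{p,p'}(\mathbf{e}_p\mathbf{e}_{p'}^T)\otimes\mathbf{D}_{p,p'}$ with every $\mathbf{D}_{p,p'}$ diagonal. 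Conjugating by the Kronecker factorization then collapses the block index via $\mathbf{1}_P^H\mathbf{e}_p=\mathbf{e}_{p'}^T\mathbf{1}_P=1$, giving $\mathbf{M}_m^H\mathbf{M}_m=\tfrac{1}{N_0P}\mathbf{W}_N^H(\sum_{p,p'}\mathbf{D}_{p,p'})\mathbf{W}_N$, a matrix of the form $\mathbf{W}_N^H(\text{diagonal})\mathbf{W}_N$, i.e.\ circulant. The identical argument applied to $\mathbf{N}_m^H\mathbf{N}_m$, whose middle matrix $\mathbf{\Lambda}_{\boldsymbol{F}_m}^H\mathbf{\Lambda}_{\boldsymbol{F}_m}$ is already diagonal, shows $\mathbf{N}_m^H\mathbf{N}_m$ is circulant as well.

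Finally, since both Gram matrices $\mathbf{M}_m^H\mathbf{M}_m$ and $\mathbf{N}_m^H\mathbf{N}_m$ are circulant, they are simultaneously diagonalized by the DFT, so the common right factor of the GSVD may be taken to be $\mathbf{X}_m=\mathbf{W}_N^H$, whose columns are orthonormal as the GSVD requires. With the water-filling solution $\mathbf{S}_m^*$ diagonal, the reconstructed covariance $\mathbf{C}_m^*=(\mathbf{X}_m^H)^{-1}\mathbf{S}_m^*\mathbf{X}_m^{-1}=\mathbf{W}_N^H\mathbf{S}_m^*\mathbf{W}_N$ is circulant; hence $\mathbf{W}_N\mathbf{C}_m^*\mathbf{W}_N^H=\mathbf{S}_m^*$ is diagonal and every subblock of $\mathbf{Q}_m$ equals $\tfrac{1}{N_0P}\mathbf{S}_m^*$, which proves the claim. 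The main obstacle is exactly this circularity step: one must verify that the block-diagonal-subblock structure of $\mathbf{\Phi}_m^{-1}$ guaranteed by Lemma 2 genuinely survives the conjugation by $\mathbf{1}_P\otimes\mathbf{W}_N$ to leave a circulant Gram matrix, and one must argue that the DFT is an admissible choice of the GSVD basis $\mathbf{X}_m$ (accounting for any non-uniqueness coming from repeated generalized singular values) so that the optimal covariance can indeed be taken circulant.
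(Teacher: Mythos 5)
Your proposal is correct, and its first half coincides exactly with the paper's: the paper likewise observes that $\mathbf{W}_{NP}\mathbf{U}$ stacks $P$ copies of $\mathbf{W}_N$ (your Kronecker factorization $\mathbf{1}_P\otimes\mathbf{W}_N$ is the same fact), so that every subblock of $\mathbf{Q}_m$ equals $\tfrac{1}{N_0}\mathbf{W}_N\mathbf{C}_m^*\mathbf{W}_N^H$ and the lemma reduces to the diagonality of that single $N\times N$ matrix. Where you diverge is in how that diagonality is established. The paper argues only that the columns of $(\mathbf{X}_m^H)^{-1}$ and of $\mathbf{W}_N^H$ are both orthogonal bases of $\mathbb{C}^N$, writes $(\mathbf{X}_m^H)^{-1}=\mathbf{W}_N^H\mathbf{K}_m$, and then asserts that $\mathbf{K}_m\mathbf{S}_m^*\mathbf{K}_m^H$ is diagonal. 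As written this is terse: orthogonality of the columns of $(\mathbf{X}_m^H)^{-1}$ only gives that $\mathbf{K}_m^H\mathbf{K}_m$ is diagonal, i.e.\ $\mathbf{K}_m$ is unitary-times-diagonal, which does not by itself force $\mathbf{K}_m\mathbf{S}_m^*\mathbf{K}_m^H$ to be diagonal. Your route supplies precisely the missing ingredient: by pushing the block-of-diagonals structure of $\mathbf{\Phi}_m^{-1}$ (Lemma 2) through the conjugation by $\mathbf{1}_P\otimes\mathbf{W}_N$, you show that both Gram matrices $\mathbf{M}_m^H\mathbf{M}_m$ and $\mathbf{N}_m^H\mathbf{N}_m$ are circulant, hence simultaneously diagonalized by the DFT, so the GSVD right factor may be taken as $\mathbf{X}_m=\mathbf{W}_N^H$ and $\mathbf{K}_m$ is effectively the identity. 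This is a more complete argument than the paper's, at the cost of one extra structural computation. The caveat you raise yourself is the only real residual point for both proofs: the lemma as stated quantifies over ``any'' optimized $\mathbf{C}_m^*$, whereas both arguments really show that a circulant optimizer \emph{exists} (and is returned when the GSVD basis is chosen as the DFT); if a numerical GSVD returns a different admissible $\mathbf{X}_m$ in the presence of repeated generalized singular values, the reconstructed $\mathbf{C}_m^*$ need not be circulant. You are right to flag this, and it affects the paper's version of the proof at least as much as yours.
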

\begin{proof}
See Appendix D.
\end{proof}

Thus, $\mathbf{Q}_m$ can be transformed as (\ref{Qm}) with $[\boldsymbol{q}_{m,n}]_{iN+n}$ redefined as $[\boldsymbol{q}_{m,n}]_{iN+n}=\sqrt{\frac{[\mathbf{W}_N\mathbf{C}_m\mathbf{W}_N^H]_{n,n}}{N_0}}$, $i=0,...,P-1$, and constraint (\ref{optproblem5.1}) can be simplified as
\begin{align}\label{con1reform1}
\nonumber
&\frac{1}{NP}{\rm tr}(\mathbf{F}_m \mathbf{U} \mathbf{C}_m^* \mathbf{U}^T \mathbf{F}_m^H)=\frac{N_0}{NP}{\rm tr}(\mathbf{\Lambda}_{\boldsymbol{F}_m} \mathbf{Q}_m \mathbf{\Lambda}_{\boldsymbol{F}_m}^H)=N_0{\rm tr}(\sum_{n=0}^{N-1} \mathbf{\Lambda}_{\boldsymbol{q}_{m,n}} \mathbf{W}_{NP} \mathbf{P} \boldsymbol{f}_{m} \boldsymbol{f}_{m}^H \mathbf{P}^T \mathbf{W}_{NP}^H \mathbf{\Lambda}_{\boldsymbol{q}_{m,n}}^H)\\
=&N_0\boldsymbol{f}_m^H \mathbf{P}^T \mathbf{W}_{NP}^H (\sum_{n=0}^{N-1} \mathbf{\Lambda}_{\boldsymbol{q}_{m,n}}^H \mathbf{\Lambda}_{\boldsymbol{q}_{m,n}}) \mathbf{W}_{NP} \mathbf{P} \boldsymbol{f}_{m}=N_0\boldsymbol{f}_m^H \mathbf{R}_m \boldsymbol{f}_m=P_m,
\end{align}
i.e.,
\begin{align}\label{optproblem5.4}
\boldsymbol{f}_m^H \mathbf{R}_m \boldsymbol{f}_m=\frac{P_m}{N_0},
\end{align}
where $\mathbf{R}_m\triangleq\mathbf{P}^T \mathbf{W}_{NP}^H (\sum_{n=0}^{N-1} \mathbf{\Lambda}_{\boldsymbol{q}_{m,n}}^H \mathbf{\Lambda}_{\boldsymbol{q}_{m,n}}) \mathbf{W}_{NP} \mathbf{P}$ is a Hermitian matrix. Besides, the objective function (\ref{optproblem6}) can be transformed to (\ref{optproblem2-sub}). Unfortunately, the manifold-based gradient ascent algorithm is not applicable any more, since constraints in Problem $({\rm P2-2})$ can not be regarded as a manifold. Here, we use second-order Taylor approximation near the result of the $t$th iteration to simplify the objective function in the $(t+1)$th iteration. To be more specific, $\bar{R}_m(\boldsymbol{f}_m^{(t+1)})$ can be approximated by
\begin{align}\label{Taylor2}
\nonumber
\bar{R}_m(\boldsymbol{f}_m^{(t+1)})
\approx&\bar{R}_m(\boldsymbol{f}_m^{(t)})+\mathfrak{R}[(\nabla_{\bar{R}_m}^{(t)})^H
(\boldsymbol{f}_m^{(t+1)}-\boldsymbol{f}_m^{(t)})]
+\frac{1}{2}(\boldsymbol{f}_m^{(t+1)}-\boldsymbol{f}_m^{(t)})^H
(\nabla_{\bar{R}_m}^2)^{(t)}(\boldsymbol{f}_m^{(t+1)}-\boldsymbol{f}_m^{(t)})\\
\nonumber
=&\frac{1}{2}(\boldsymbol{f}_m^{(t+1)})^H(\nabla_{\bar{R}_m}^2)^{(t)}\boldsymbol{f}_m^{(t+1)}
+\frac{1}{2}(\boldsymbol{\eta}_m^{(t+1)})^H\boldsymbol{f}_m^{(t+1)}
+\frac{1}{2}(\boldsymbol{f}_m^{(t+1)})^H\boldsymbol{\eta}_m^{(t+1)}\\
&-(\nabla_{\bar{R}_m}^{(t)})^H\boldsymbol{f}_m^{(t)}
-(\boldsymbol{f}_m^{(t)})^H\nabla_{\bar{R}_m}^{(t)}+\bar{R}_m(\boldsymbol{f}_m^{(t)}),
\end{align}
where $\nabla_{\bar{R}_m}^2$ denotes the second order derivative of $\bar{R}_m(\boldsymbol{f}_m)$, which is given by
\begin{align}\label{secder}
\nabla_{\bar{R}_m}^2=\frac{1}{\ln2}\sum_{n=0}^{N-1}\frac{(1+\boldsymbol{f}_m^H\mathbf{B}_{m,n}\boldsymbol{f}_m)
\mathbf{B}_{m,n}-\mathbf{B}_{m,n}\boldsymbol{f}_m\boldsymbol{f}_m^H\mathbf{B}_{m,n}}
{(1+\boldsymbol{f}_m^H\mathbf{B}_{m,n}\boldsymbol{f}_m)^2},
\end{align}
and $\boldsymbol{\eta}_m^{(t+1)}\triangleq \nabla_{\bar{R}_m}^{(t)}-(\nabla_{\bar{R}_m}^2)^{(t)}
\boldsymbol{f}_m^{(t)}$. Thus, Problem $({\rm P2-2})$ is rewritten as
\begin{align}\label{optproblem9}
({\rm P2-2'}):\mathop{\max}_{\boldsymbol{f}_m}\quad &\boldsymbol{f}_m^H(\nabla_{\bar{R}_m}^2)^{(t)}\boldsymbol{f}_m+
\boldsymbol{\eta}_m^H\boldsymbol{f}_m+\boldsymbol{f}_m^H\boldsymbol{\eta}_m\\
\nonumber
\textit{s.t.}\quad &(\ref{optproblem5.2}), (\ref{optproblem5.3}), (\ref{optproblem5.4}),
\end{align}
where the constant $\bar{R}_m(\boldsymbol{f}_m^{(t)})-(\nabla_{\bar{R}_m}^{(t)})^H\boldsymbol{f}_m^{(t)}
-(\boldsymbol{f}_m^{(t)})^H\nabla_{\bar{R}_m}^{(t)}$, the coefficient $\frac{1}{2}$ and the superscript $(t+1)$ in the objective function are omitted for clarity. Problem $({\rm P2-2'})$ is a non-convex quadratic constrained quadratic programming (QCQP) due to the equality constraints. To cope with this obstacle, we apply the semidefinite relaxation (SDR) \cite{LMSY10} technique to obtain an approximate solution. Specificly, let $\mathbf{\bar{F}}_m\triangleq\begin{bmatrix}\boldsymbol{f}_m\boldsymbol{f}_m^H & \boldsymbol{f}_mf^* \\ \boldsymbol{f}_m^Hf & \vert f\vert^2\end{bmatrix}$, where $f$ is an auxiliary variable \cite{WJAH10}. Then, $({\rm P2-2'})$ is equivalent to
\begin{subequations}
\begin{align}\label{optproblem10}
({\rm P2-2''}):\mathop{\max}_{\mathbf{\bar{F}}_m}\quad &{\rm tr}(\begin{bmatrix}(\nabla_{\bar{R}_m}^2)^{(t)} & \boldsymbol{\eta}_m \\ \boldsymbol{\eta}_m^H & 0 \end{bmatrix}\mathbf{\bar{F}}_m)\\
\textit{s.t.}\quad &{\rm tr}(\begin{bmatrix}\mathbf{E}_{m,i} & \mathbf{0}_{N_f\times 1} \\ \mathbf{0}_{1\times N_f} & 0 \end{bmatrix}\mathbf{\bar{F}}_m)\leq e_{m,i}, \forall i,\\
&{\rm tr}(\begin{bmatrix}\mathbf{R}_m & \mathbf{0}_{N_f\times 1} \\ \mathbf{0}_{1\times N_f} & 0 \end{bmatrix}\mathbf{\bar{F}}_m)=\frac{P_m}{N_0},\\
&{\rm tr}(\begin{bmatrix}\mathbf{I}_{N_f} & \mathbf{0}_{N_f\times 1} \\
\mathbf{0}_{1\times N_f} & 0 \end{bmatrix}\mathbf{\bar{F}}_m)=1,\\
&{\rm tr}(\begin{bmatrix}\mathbf{0}_{N_f\times N_f} & \mathbf{0}_{N_f\times 1} \\
\mathbf{0}_{1\times N_f} & 1 \end{bmatrix}\mathbf{\bar{F}}_m)=1,\\\label{optproblem10.6}
&{\rm rank}(\mathbf{\bar{F}}_m)=1,\\\label{optproblem10.7}
&\mathbf{\bar{F}}_m\succeq 0,
\end{align}
\end{subequations}
where constraint (\ref{optproblem10.7}) ensures that $\mathbf{\bar{F}}_m$ is positive semidefinite. It should be noted that constraint (\ref{optproblem10.6}) is hard to satisfy, and when it is removed, Problem $({\rm P2-2''})$ is a convex SDP and can be solved efficiently via existing software, e.g., CVX \cite{MGSB18}. Thus, we use CVX to solve Problem $({\rm P2-2''})$ without constraint (\ref{optproblem10.6}). Let $\mathbf{\bar{F}}_m^*$ denotes the optimal solution and suppose ${\rm rank}(\mathbf{\bar{F}}_m^*)=r$. The eigenvalue decomposition (EVD) of $\mathbf{\bar{F}}_m^*$ is expressed as $\mathbf{\bar{F}}_m^*=\sum_{i=1}^r\bar{\lambda}_i\boldsymbol{\bar{f}}_i\boldsymbol{\bar{f}}_i^H$, where $\bar{\lambda}_1\geq...\geq\bar{\lambda}_r$ are the eigenvalues and $\boldsymbol{\bar{f}}_1,...,\boldsymbol{\bar{f}}_r\in \mathbb{C}^{N_f+1}$ are the respective eigenvectors. Since the best rank-one approximation $\mathbf{\hat{F}}_m^*$ to $\mathbf{\bar{F}}_m^*$ (in the least two-norm sense) is given by $\mathbf{\hat{F}}_m^*=\bar{\lambda}_1\boldsymbol{\bar{f}}_1\boldsymbol{\bar{f}}_1^H$, we take the first $N_f$ elements of $\sqrt{\bar{\lambda}_1}\boldsymbol{\bar{f}}_1$ as the optimal $\boldsymbol{f}_m$.

Taking both covariance matrix and waveform optimization into account, the whole sum rate maximization algorithm for solving Problem $({\rm P2})$ is summarized in Algorithm \ref{whole1}.
\begin{algorithm}
	\caption{Sum Rate Maximization Algorithm for Solving $({\rm P2})$}
    \renewcommand{\algorithmicrequire}{\textbf{Input:}}
    \renewcommand{\algorithmicensure}{\textbf{Output:}}
	\label{whole1}
	\begin{algorithmic}[1]
        \REQUIRE $M$, $N$, $P$, $N_0$, $P_m$, $\boldsymbol{h}_m,m=1,...,M$
        \ENSURE Optimal $\mathbf{C}_m$ and $\boldsymbol{f}_m, m=1,...,M$
		\STATE Initialize $\boldsymbol{f}_m$, $\mathbf{C}_m=PP_m\mathbf{I}_N$ and construct $\mathbf{U}$, $\mathbf{H}_m$ and $\mathbf{F}_m, m=1,...,M$.
		\REPEAT
        \FOR{$m=1$ to $M$}
        \STATE Compute $\nabla_{\bar{R}_m}^2$ and $\nabla_{\bar{R}_m}$ by (\ref{secder}) and (\ref{nabla}), respectively.
		\STATE Solve Problem $({\rm P2-2''})$ via CVX and obtain the optimal solution $\mathbf{\bar{F}}_m^*$.
        \STATE Use EVD for $\mathbf{\bar{F}}_m^*$ and sort the eigenvalues in non-increasing order $\bar{\lambda}_1\geq...\geq\bar{\lambda}_r$.
        \STATE Set $\boldsymbol{f}_m$ as the first $N_f$ elements of $\sqrt{\bar{\lambda}_1}\boldsymbol{\bar{f}}_1$.
        \STATE Compute $\mathbf{M}_m=\frac{1}{\sqrt{N_0}}\mathbf{\Phi}_m^{-1/2} \mathbf{\Lambda}_{\boldsymbol{H}_m} \mathbf{\Lambda}_{\boldsymbol{F}_m} \mathbf{W}_{NP} \mathbf{U}$ and $\mathbf{N}_m=\mathbf{\Lambda}_{\boldsymbol{F}_m} \mathbf{W}_{NP} \mathbf{U}$.
        \STATE Use GSVD for $\mathbf{M}_m$ and $\mathbf{N}_m$ to obtain $\mathbf{\Lambda}_{\mathbf{M}_m}$, $\mathbf{\Lambda}_{\mathbf{N}_m}$ and $\mathbf{X}_m$.
        \STATE Use water-filling algorithm to obtain the diagonal matrix $\mathbf{S}_m$.
        \STATE Set $\mathbf{C}_m=(\mathbf{X}_m^H)^{-1}\mathbf{S}_m\mathbf{X}_m^{-1}$.
        \ENDFOR
        \UNTIL the sum rate converges.
	\end{algorithmic}
\end{algorithm}

\subsection{Complexity Analysis}
In this subsection we analyze the complexity of Algorithm \ref{whole1}. In a single iteration, the dominated computational cost comes from Step 4, Step 5 and Step 9. For Step 4, its complexity is the same as that of Step 4 in Algorithm \ref{waveopt1}, which is $\mathcal{O}(MNP^2N_f+MNPN_f^2)$. For Step 5, it involves the computations of matrix $\mathbf{R}_m$ and $\mathbf{E}_{m,i}$, as well as using interior point algorithm for solving SDP in CVX. The complexity of both the matrix computations is $\mathcal{O}(MNPN_f^2)$ (this is the worst case for $\mathbf{E}_{m,i}$), and that of interior point algorithm is $\mathcal{O}(T_{cvx}(S_mN_f^3+S_m^2N_f^2))$ \cite{BBJY05} in the worst case, where $T_{cvx}$ is the total iteration number that CVX needs. The value of $T_{cvx}$ is determined by the software, and we observe in our simulations that it is about $15$. For Step 9, considering that GSVD involves QR decomposition \cite{GHCV96}, its complexity is $\mathcal{O}(MN^3P)$. Thus, the normalized complexity of the proposed algorithm per QAM symbol is $\mathcal{O}(T_{iter}(P^2N_f+PN_f^2+T_{cvx}(S_mN_f^3+S_m^2N_f^2)/N+N^2P))$, where $T_{iter}$ is the total number of iteration that the algorithm needs and $S_1=...=S_M$ is assumed.

\section{Simulation Results}
In this section we provide simulation results to show the performance of the proposed waveform and covariance matrix optimization algorithm. Different configurations including various lengths of filters and upsampling factors are simulated to expose the performance of the algorithms. The channel is modeled as a $L_h$-tap complex random Gaussian channel with equal power gains among the taps and the total power is normalized to unity. Here, $L_h=10$. Other key configurations and parameters are given as follows. We assume there are totally $M=8$ users, which indicates the number of subbands is $8$. The number of symbols per subband is $N=48$. Besides, the length of CP is $\lceil (N_f+L_h-1)/P\rceil$, which is the minimum integer no smaller than $(N_f+L_h-1)/P$. All users share the same transmit signal-to-noise ratio (SNR), which is denoted by ${\gamma}=\frac{P_m}{N_0}$. For upsampling factors, we select three cases: for $P=1$, it is maximally oversampled; for $P=M/4$, it is $4$x oversampled; for $P=M$, it is critically sampled.

\subsection{Performance of Waveform Optimization on The Full Band}
In Fig. \ref{f_conver}, we aim to explicitly show the convergence behavior of Algorithm \ref{waveopt1} and the effect of filter length on it, where we take three different filter lengths ($N_f=16/32/48$) into consideration. Fig. \ref{f_conver} shows the change of the sum rate with the number of outer iterations in Algorithm \ref{waveopt1} when $P=M$ and ${\gamma}=10$ dB. We can observe that the sum rate increases monotonically, and the proposed waveform optimization algorithm converges dramatically fast (about 5 outer iterations). Besides, the filter length has negligible impact on the speed of convergence.

\begin{figure}[htbp]
\begin{minipage}[t]{0.5\linewidth}
\setlength{\abovecaptionskip}{-0.3cm}
\centering
\includegraphics[width=3in]{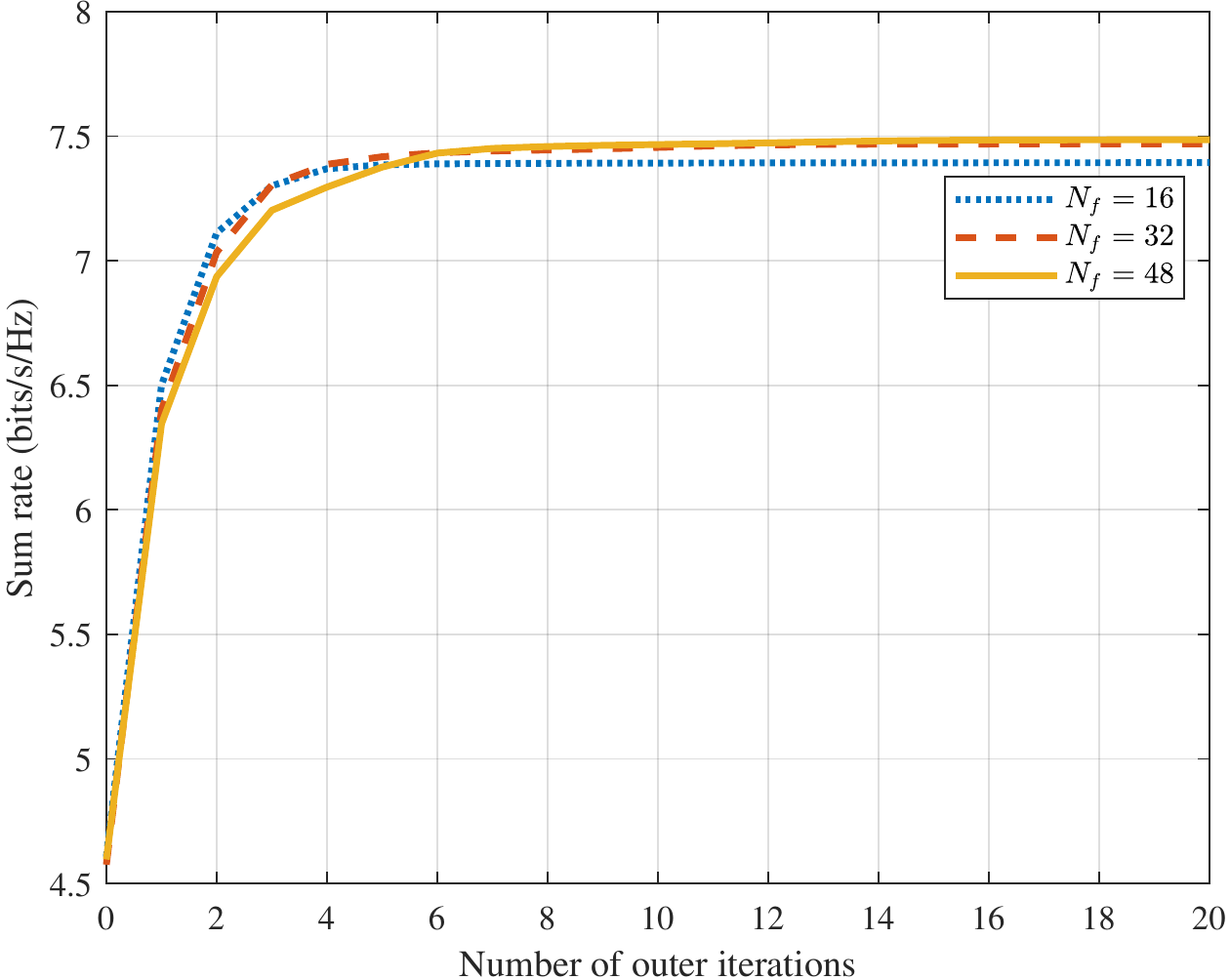}
\caption{Convergence behavior of Algorithm \ref{waveopt1}.}
\label{f_conver}
\end{minipage}%
\hfill
\begin{minipage}[t]{0.5\linewidth}
\setlength{\abovecaptionskip}{-0.3cm}
\centering
\includegraphics[width=3in]{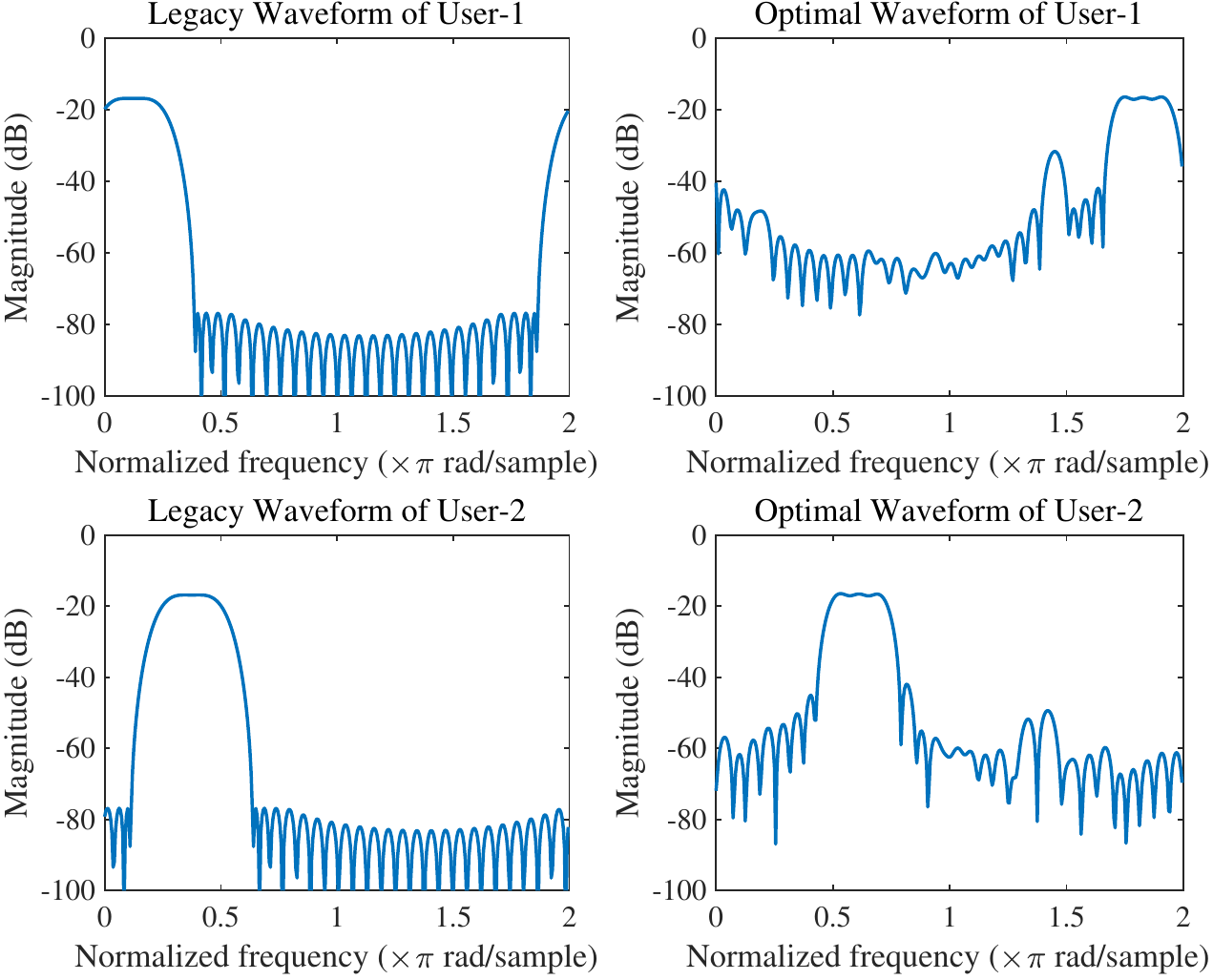}
\caption{Legacy and optimal waveforms.}
\label{f_wave}
\end{minipage}
\end{figure}

Next, we compare the performance of the following two systems which share the same framework as shown in Fig. \ref{system model}, and the difference is only in the design method of the filters:

1) \textbf{CP-FBMA with legacy waveform}: filters are borrowed directly from the ones in EMFB without any optimization, and each filter occupies a wide band rather than a narrow band. The specific filter design algorithm can be referred to \cite{DZWW17,DGZW17};

2) \textbf{CP-FBMA with proposed optimal waveform}: filters are optimized based on Algorithm \ref{waveopt1} on the full band.

Fig. \ref{f_wave} shows both legacy and optimal waveforms of the first two users in frequency domain. Key parameters are set as $N_f=32$, $P=M$ and ${\gamma}=10$ dB. It can be observed that legacy waveforms on different subbands share the same waveform shape, given that all synthesis filters are derived from a prototype filter, while such regularity of waveforms in frequency domain is broken after optimization, and positions of the passbands are also changed to better match the frequency response of the channels. Besides, the sum rate is improved by $63.26\%$ in this case.

\begin{figure}[htbp]
\begin{minipage}[t]{0.5\linewidth}
\setlength{\abovecaptionskip}{-0.3cm}
\centering
\includegraphics[width=3in]{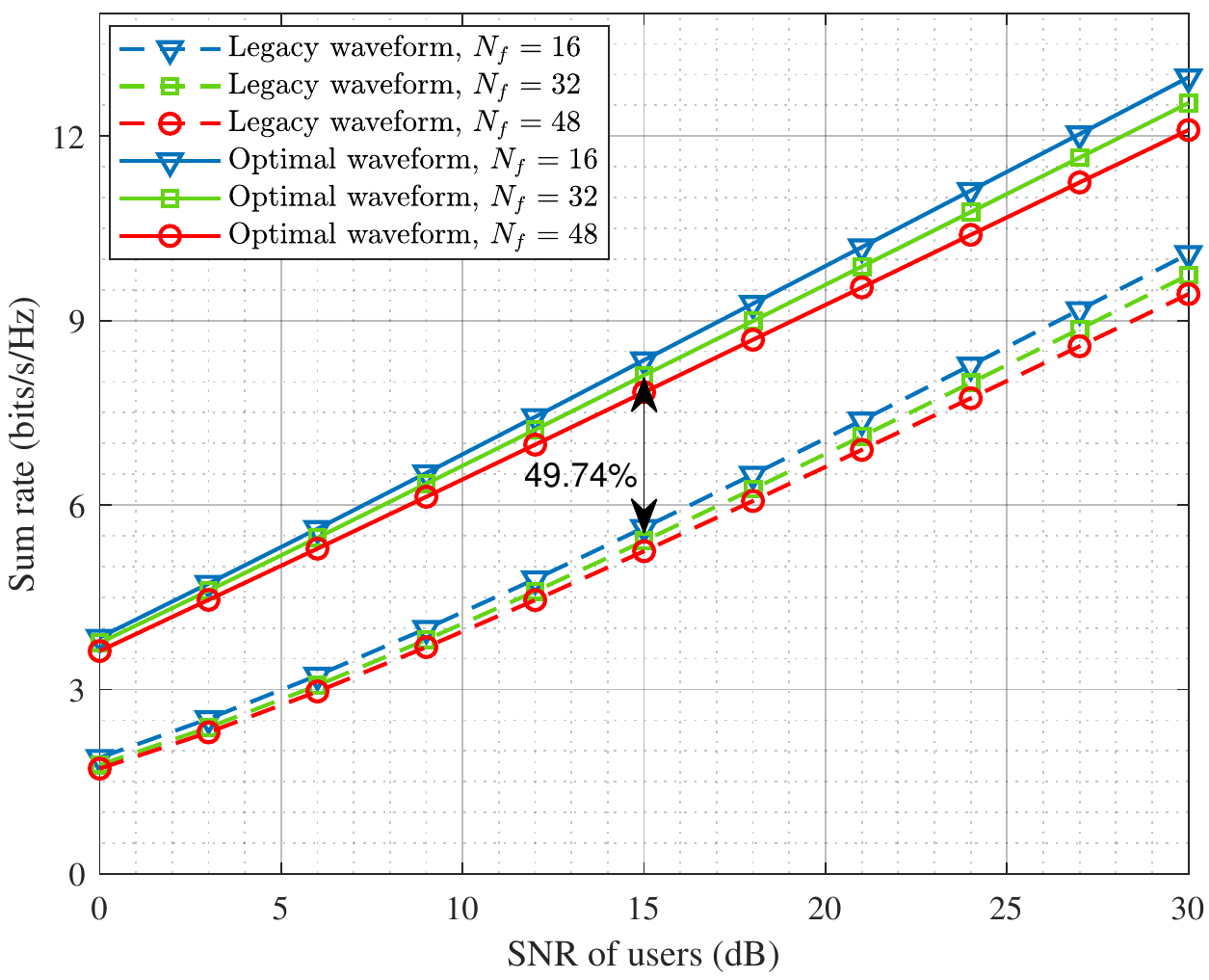}
\caption{Performance comparison under $P=M$.}
\label{f_rate_Nf}
\end{minipage}%
\begin{minipage}[t]{0.5\linewidth}
\setlength{\abovecaptionskip}{-0.3cm}
\centering
\includegraphics[width=3in]{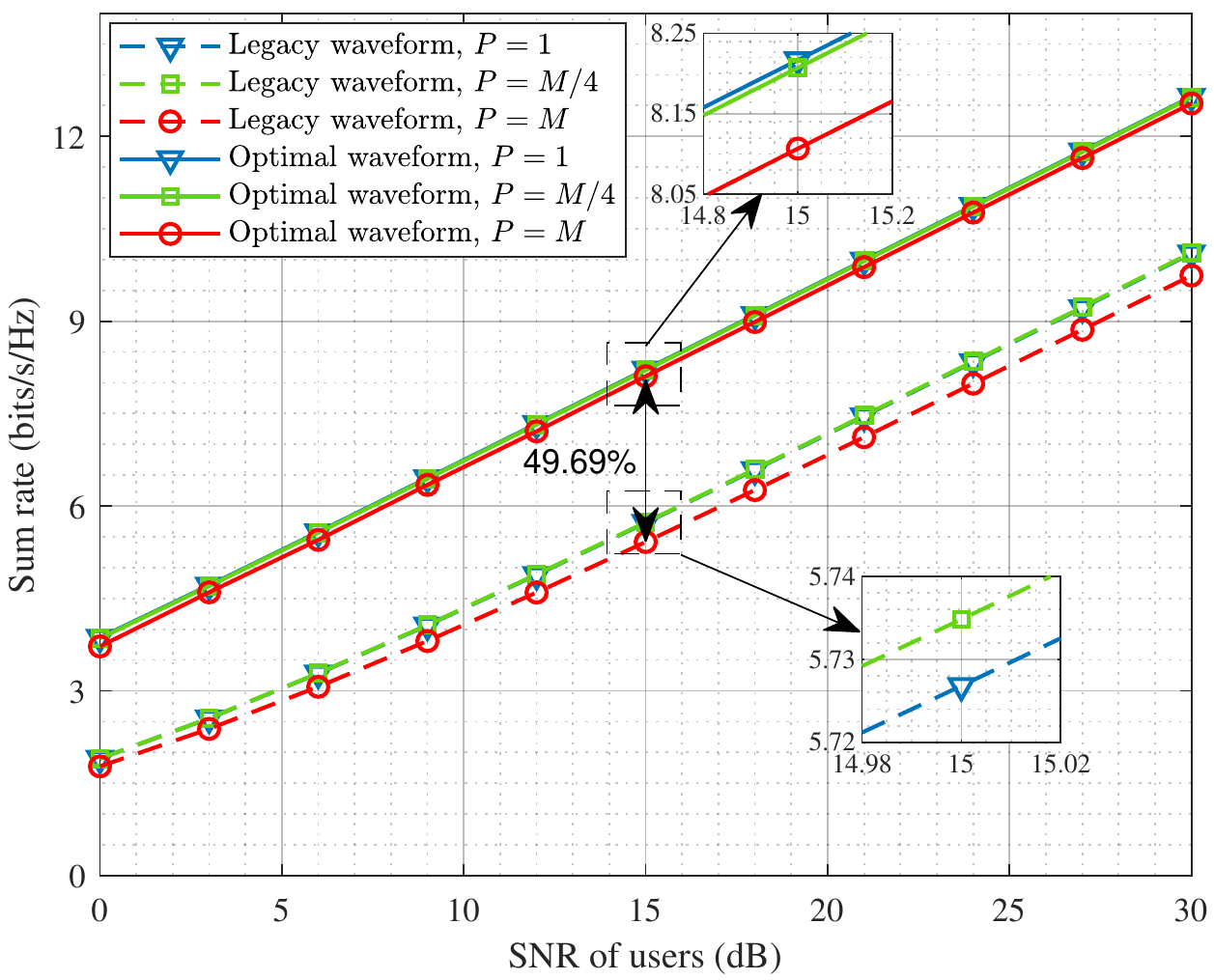}
\caption{Performance comparison under $N_f=32$.}
\label{f_rate_P}
\end{minipage}
\end{figure}

Fig. \ref{f_rate_Nf} and Fig. \ref{f_rate_P} demonstrate the sum rate of the system with optimal waveform compared to that with legacy waveform, where ${\gamma}$ ranges from $0$ to $30$ dB and the impacts of filter length and upsampling factor are considered. We can see that: 1) by using the proposed waveform optimization algorithm, the sum rate of the system is improved significantly. Take the case of $N_f=32$ and ${\gamma}=15$ dB in Fig. \ref{f_rate_Nf} as an example. The sum rate is increased by $49.74\%$ compared to that of the system with legacy waveform; 2) filter length has no noticeable impact on the sum rate, with or without the optimal waveform. Nevertheless, for the system with optimal waveform, even in the case that the filter length is shortest ($N_f=16$), its sum rate is still higher than that of the system with legacy waveform whose filter length is much longer. This indicates that in non-orthogonal CP-FBMA system, by using optimal waveform, we can significantly increase the rate while even reducing the filter length, which is beneficial to reduce the complexity of the optimization algorithm and transmission delay of per QAM symbol; 3) although decreasing upsampling factor $P$ can obtain a visible performance gain of the sum rate for legacy waveform, it is negligible compared to that by waveform optimization. For instance, the sum rate is only increased by $5.89\%$ via changing $P$ from $M$ to $M/4$ when $\gamma=15$ dB in Fig. \ref{f_rate_P}. However, using the optimization algorithm, it is increased by $49.69\%$ under $P=M$.

Finally, we compare the performance of the above two non-orthogonal CP-FBMA systems with EMFB which is selected as a representative of the conventional orthogonal FBMA. It is employed in multi-user communication where every subcarrier is exclusively allocated to a specific user for access. The configurations are set as follows: the number of users is $M=8$. For EMFB (denoted as conventional FBMA in the legend), the number of subcarriers is $M'=8/64/128$ (each user occupies $1/8/16$ subcarriers), the length of filter is $N_f=4M'$, the upsampling factor $P=M'/2$, the input symbols are real and no CP is inserted; For CP-FBMA, the length of filter is $N_f=4M$ and the upsampling factor $P=M$. Fig. \ref{rate_FBMA} shows the sum rate of the three systems. It can be observed that the rate performance of conventional FBMA is almost the same as that of CP-FBMA with legacy waveform regardless of $M'$. But it tends to perform worse in low SNR regime and better in high SNR regime. Obviously, by using the proposed optimization algorithm in CP-FBMA system, the sum rate can be greatly improved, whereas conventional FBMA can not apply the waveform optimization algorithm, since it needs to meet orthogonality between subbands in real domain.

\begin{figure}[ht]
\setlength{\abovecaptionskip}{-0.3cm}
\centering{\includegraphics[width=3in]{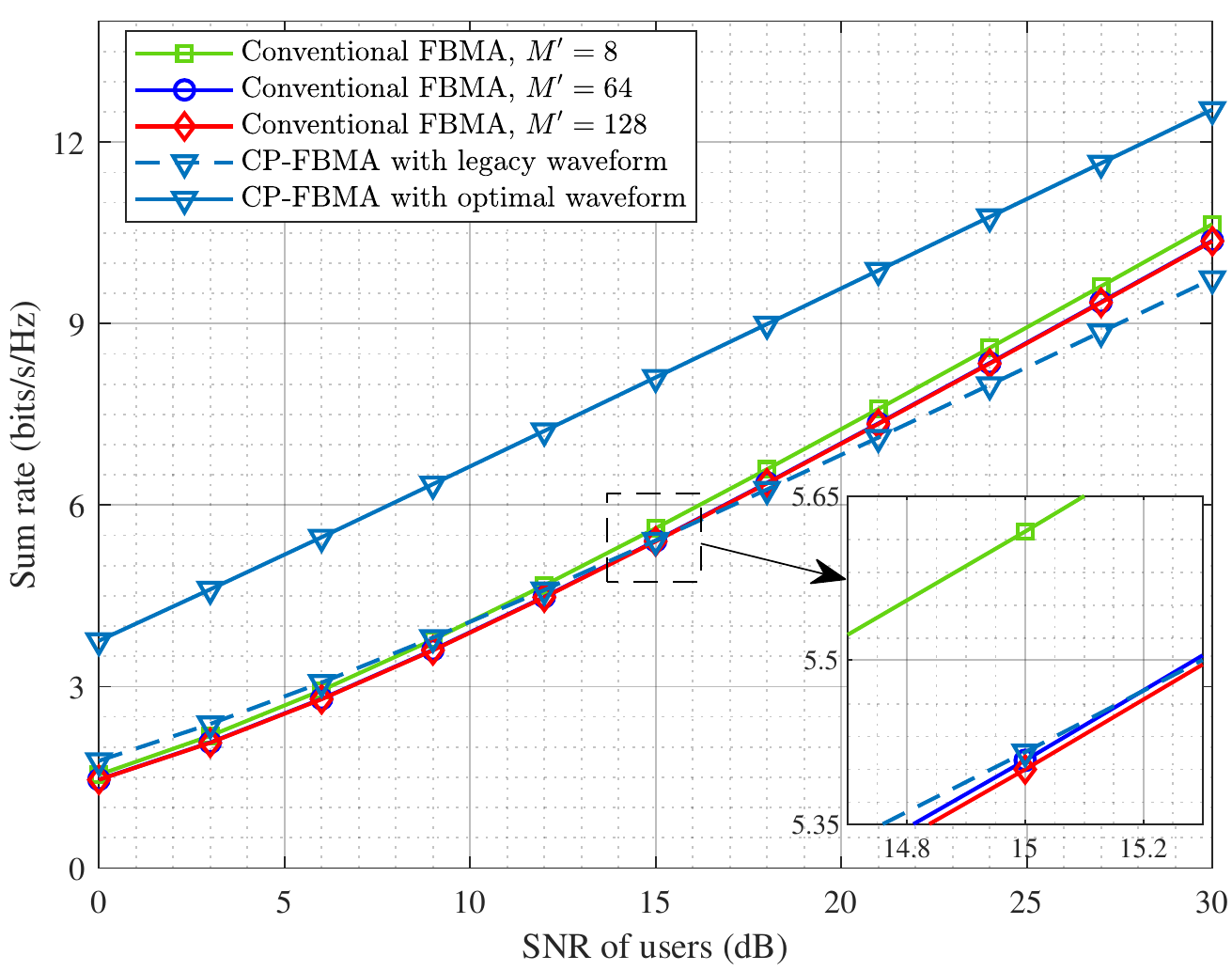}}
\caption{Performance comparison between conventional FBMA and CP-FBMA.}
\label{rate_FBMA}
\end{figure}

\subsection{Performance of Joint Optimization}
In this subsection, we show the performance of joint optimization, where stopband energy constraints are taken into account to meet requirements of the allowed fragmented frequency bands in practice. Given constraints on stopband, a direct filter design method is to use some classical filter design methods. For example, the complex and nonlinear-phase equiripple finite impulse response (FIR) filter design \cite{LJJH95} which is packaged in $cfirpm$ function in MATLAB. However, only positions and amplitudes of stopbands are allowed to be specified when using $cfirpm$ function, not the energy. Thus, after generating the filters, we compute the stopband energy of each filter $\bar{e}_{m,i}, m=1,...,M, i=1,...,S_m$, and set them as the upperbounds of stopband energy in optimization, which ensures a fair comparison between waveforms with or without optimization. Fig. \ref{C_f_wave_incon} shows waveforms designed by $cfirpm$ function and optimal waveforms based on the proposed joint optimization algorithm in frequency domain, where the shadow areas denote stopbands and the maximum allowed energy of each stopband of the first two users is set as $\bar{e}_{1,1}=3.23\times 10^{-4}$, $\bar{e}_{1,2}=2.24\times 10^{-4}$, $\bar{e}_{2,1}=3.53\times 10^{-4}$, $\bar{e}_{2,2}=3.52\times 10^{-4}$. Other configurations are $N_f=32$, $P=M$ and ${\gamma}=10$ dB. It can be observed that: 1) all the waveforms satisfy their respective stopband energy constraints; 2) compared to the waveforms designed by $cfirpm$ function, the optimal waveforms are irregular.

\begin{figure}[htbp]
\begin{minipage}[t]{0.5\linewidth}
\setlength{\abovecaptionskip}{-0.3cm}
\centering
\includegraphics[width=3in]{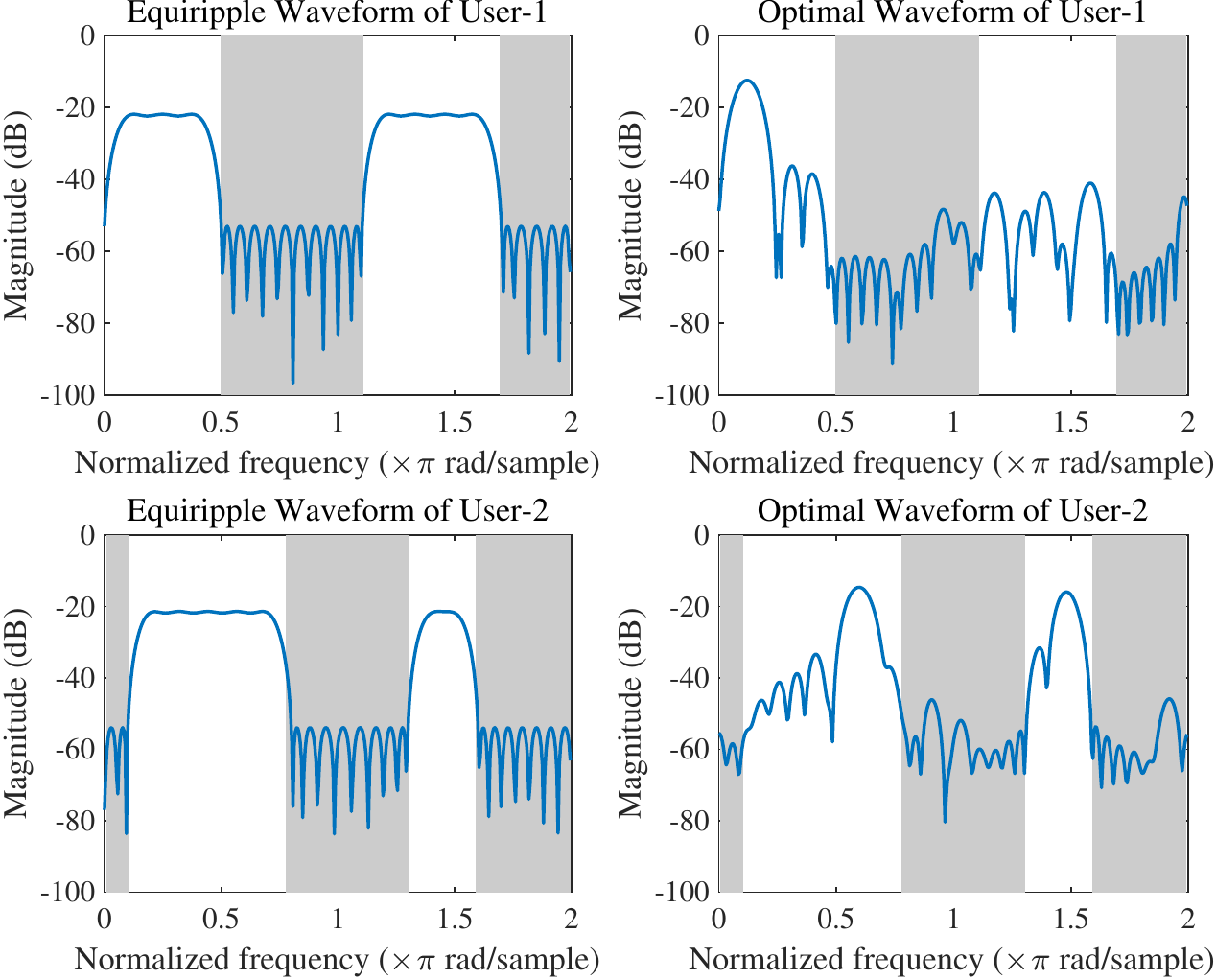}
\caption{Equiripple and optimal waveforms.}
\label{C_f_wave_incon}
\end{minipage}%
\hfill
\begin{minipage}[t]{0.5\linewidth}
\setlength{\abovecaptionskip}{-0.3cm}
\centering
\includegraphics[width=3in]{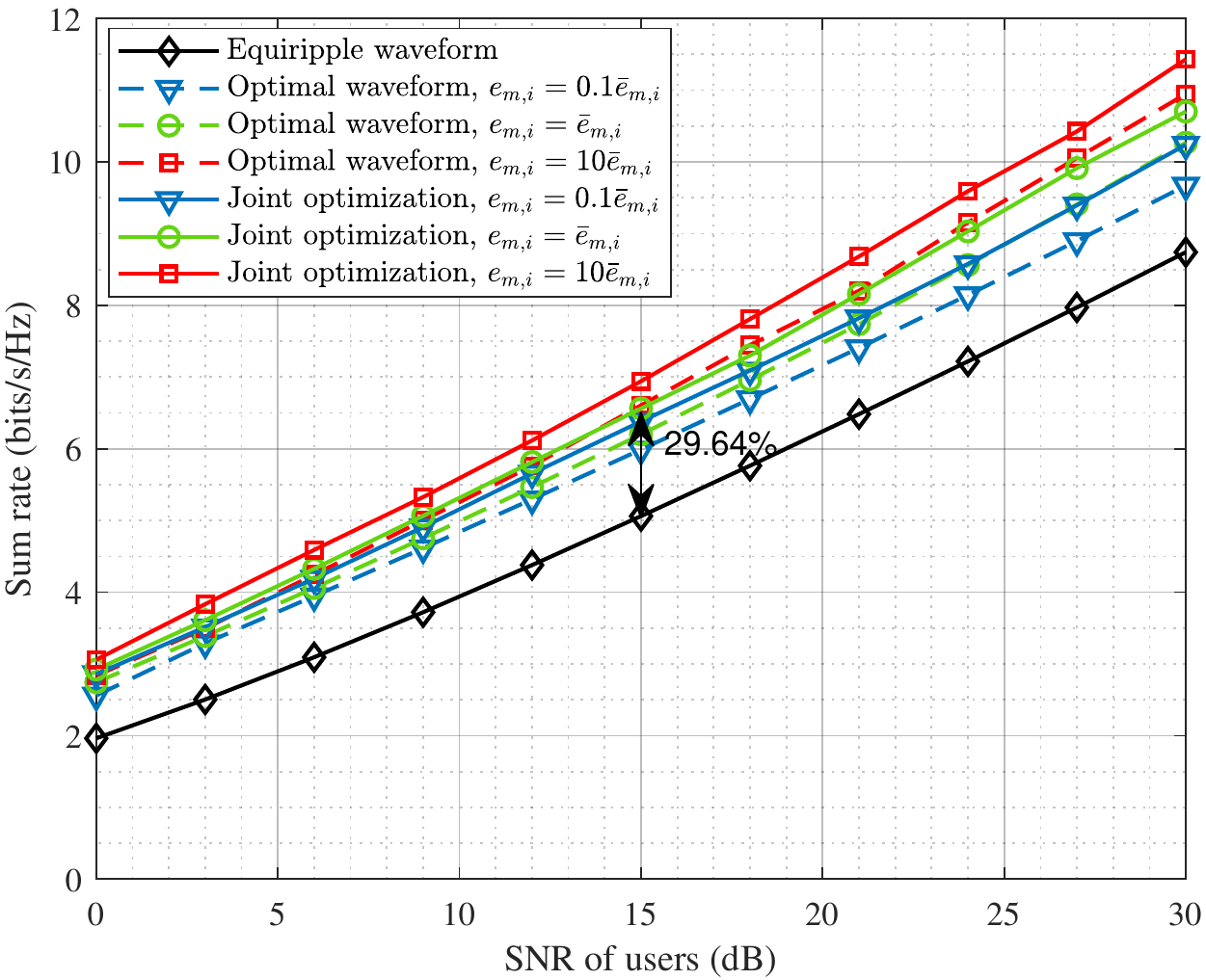}
\caption{Performance comparison under $N_f=32$ and $P=M$.}
\label{C_f_rate_incon}
\end{minipage}
\end{figure}

In Fig. \ref{C_f_rate_incon}, we compare the performance of the following systems with the same stopband energy constraints:

1) \textbf{CP-FBMA with equiripple waveform}: filters are designed by $cfirpm$ function with $\mathbf{C}_m=PP_m\mathbf{I}_N, m=1,...,M$. Besides, each filter occupies all available bandwidth with stopband energy constraints on the respective subband;

2) \textbf{CP-FBMA with optimal waveform only}: filters are optimized by Algorithm \ref{whole1} while fixing $\mathbf{C}_m=PP_m\mathbf{I}_N, m=1,...,M$, in each iteration;

3) \textbf{CP-FBMA with joint optimization}: both filters and covariance matrices are optimized by Algorithm \ref{whole1}.

As illustrated in the previous subsection, changing $N_f$ or $P$ has no apparent effect on the performance of the optimization, Thus, they are fixed as $N_f=32$ and $P=M$. We can observe from Fig. \ref{C_f_rate_incon} that: 1) though there are more constraints on filters, the joint optimization algorithm can still improve the sum rate of non-orthogonal CP-FBMA system significantly. Take the case of $\gamma=15$ dB and $e_{m,i}=\bar{e}_{m,i}$ as an example. The sum rate is increased by $29.64\%$ compare to that of the system with equiripple waveform; 2) system performance improves with the increasing of the stopband energy $e_{m,i}, \forall m,i$. However, even they are set as tiny values, the improvement of the sum rate brought by the joint optimization algorithm is still considerable; 3) the gain of joint optimization over waveform-only optimization is pronounced. This indicates that covariance matrix optimization plays an important role in improving the sum rate of the system.

\section{Conclusions and Future Works}
This paper firstly presents the waveform optimization algorithm for non-orthogonal CP-FBMA to maximize the sum rate. Then, covariance matrix optimization and stopband energy constraints are also considered to further improve the system performance and satisfy the requirement of frequency band utilization in practical applications. Through simulation results, we can see that the optimization algorithms converge dramatically fast and can significantly improve the sum rate of the system. Moreover, the sum rate of the optimal waveform with shorter filter length is apparently higher than that of the legacy waveform with longer filter length. Finally, by using the proposed joint optimization algorithm, the rate performance can still be improved even with stopband energy constraints. As a preliminary work in waveform design for non-orthogonal FBMA framework, this paper aims to reveal the maximum potential rate gain of CP-FBMA through waveform optimization. It is based on instantaneous CSI with a single antenna at the receiver side. To make the system more realistic, it is worth studying waveform optimization based on statistical CSI and with multiple, or even massive receive antennas in the future. Waveform optimization in the downlink scenario is also among other potential works.

\appendices
\section{Proof of Lemma 1}
By setting $PP_m\mathbf{I}_N=\mathbf{C}_m$ in $\frac{1}{NP}{\rm tr}(\mathbf{F}_m \mathbf{U} \mathbf{C}_m \mathbf{U}^H \mathbf{F}_m^H)=P_m$, we only need to prove when $\boldsymbol{f}_m^H\boldsymbol{f}_m=1$, ${\rm tr}(\mathbf{F}_m\mathbf{U}\mathbf{U}^H\mathbf{F}_m^H)=N$ holds. As mentioned that $N_f$ is a few times of $M$ in Section \uppercase\expandafter{\romannumeral2}, we suppose $N_f=\delta M$, where $\delta$ is a positive integer. Note that $\mathbf{F}_m\mathbf{U}$ can be divided into $N$ subblocks by every $P$ rows, i.e., $\mathbf{F}_m\mathbf{U}=[\mathbf{J}_{m,0}^T,...,\mathbf{J}_{m,N-1}^T]^T$.
Each subblock $\mathbf{J}_{m,n}\in \mathbb{C}^{P\times N}$ contains the same nonzero elements at the same rows, which are written as
\begin{align}\label{subblock}
\begin{bmatrix}
[\boldsymbol{f}_m]_0   & [\boldsymbol{f}_m]_P    & \cdots & [\boldsymbol{f}_m]_{(\delta-1)P}   \\
[\boldsymbol{f}_m]_1   & [\boldsymbol{f}_m]_{P+1}  & \cdots & [\boldsymbol{f}_m]_{(\delta-1)P+1} \\
\vdots   & \vdots    & \ddots & \vdots        \\
[\boldsymbol{f}_m]_{P-1} & [\boldsymbol{f}_m]_{2P-1} & \cdots & [\boldsymbol{f}_m]_{\delta P-1}     \\
\end{bmatrix}.
\end{align}
Thus, ${\rm tr}(\mathbf{F}_m\mathbf{U}\mathbf{U}^H\mathbf{F}_m^H)$ can be further simplified as
\begin{align}
{\rm tr}(\mathbf{F}_m\mathbf{U}\mathbf{U}^H\mathbf{F}_m^H)
=\sum_{n=0}^{N-1}{\rm tr}(\mathbf{J}_{m,n}\mathbf{J}_{m,n}^H)
=N\sum_{j=0}^{P-1}\sum_{i=0}^{\delta-1}\vert [\boldsymbol{f}_m]_{iP+j}\vert^2
=N\boldsymbol{f}_m^H\boldsymbol{f}_m=N,
\end{align}
which completes the proof.

\section{Proof of Lemma 2}
First, we aim to prove that $\mathbf{\Phi}_m^{-1}$ has such matrix structure, and use hypothesis-deduction method to solve it. Define a $2\times 2$ Hermitian block matrix $\mathbf{\Phi}\triangleq\begin{bmatrix} \mathbf{\Lambda}_{1,1} & \mathbf{\Lambda}_{1,2} \\ \mathbf{\Lambda}_{1,2}^H & \mathbf{\Lambda}_{2,2} \end{bmatrix}$, where $\mathbf{\Lambda}_{p,q}$ is a full rank diagonal matrix of size $L\times L$. Without loss of generality, $L$ can be determined as any positive integer. Besides, let $\mathbf{\Phi}^{-1}\triangleq\begin{bmatrix} \mathbf{\Theta}_{1,1} & \mathbf{\Theta}_{1,2} \\ \mathbf{\Theta}_{2,1} & \mathbf{\Theta}_{2,2} \end{bmatrix}$ denotes the inverse matrix of $\mathbf{\Phi}$. Due to that $\mathbf{\Phi}\mathbf{\Phi}^{-1}=\mathbf{I}_{2L}$, we have
\begin{equation} \label{22equation}
\left\{ \begin{matrix}
\mathbf{\Lambda}_{1,1}\mathbf{\Theta}_{1,1}+\mathbf{\Lambda}_{1,2}\mathbf{\Theta}_{2,1}=\mathbf{I}_L,\quad
&\mathbf{\Lambda}_{1,1}\mathbf{\Theta}_{1,2}+\mathbf{\Lambda}_{1,2}\mathbf{\Theta}_{2,2}=\mathbf{0}_{L\times L}\hfill\\
\mathbf{\Lambda}_{1,2}^H\mathbf{\Theta}_{1,1}+\mathbf{\Lambda}_{2,2}\mathbf{\Theta}_{2,1}=\mathbf{0}_{L\times L},\quad
&\mathbf{\Lambda}_{1,2}^H\mathbf{\Theta}_{1,2}+\mathbf{\Lambda}_{2,2}\mathbf{\Theta}_{2,2}=\mathbf{I}_L\hfill.
\end{matrix}
\right.
\end{equation}
By solving (\ref{22equation}), we obtain
\begin{equation} \label{22result}
\left\{ \begin{matrix}
\mathbf{\Theta}_{1,1}=(\mathbf{\Lambda}_{1,1}-\mathbf{\Lambda}_{1,2}\mathbf{\Lambda}_{2,2}^{-1}\mathbf{\Lambda}_{1,2}^H)^{-1},\quad
&\mathbf{\Theta}_{1,2}=(\mathbf{\Lambda}_{1,2}^H-\mathbf{\Lambda}_{2,2}\mathbf{\Lambda}_{1,2}^{-1}\mathbf{\Lambda}_{1,1})^{-1}\hfill\\
\mathbf{\Theta}_{2,1}=[\mathbf{\Lambda}_{1,2}-\mathbf{\Lambda}_{1,1}(\mathbf{\Lambda}_{1,2}^H)^{-1}\mathbf{\Lambda}_{2,2}]^{-1},\quad
&\mathbf{\Theta}_{2,2}=(\mathbf{\Lambda}_{2,2}-\mathbf{\Lambda}_{1,2}^H\mathbf{\Lambda}_{1,1}^{-1}\mathbf{\Lambda}_{1,2})^{-1}\hfill.
\end{matrix}
\right.
\end{equation}
It is observed that $\forall p, q$, $\mathbf{\Theta}_{p,q}$ is a diagonal matrix. Moreover, since $\mathbf{\Phi}$ is a Hermitian matrix, for any $p$, $\mathbf{\Lambda}_{p,p}$ is a real matrix. Therefore, $\mathbf{\Theta}_{2,1}=\mathbf{\Theta}_{1,2}^H$, and $\mathbf{\Theta}_{1,1}$, $\mathbf{\Theta}_{2,2}$ are also real matrices, which completes the proof in the case that $\mathbf{\Phi}$ is a $2\times 2$ Hermitian block matrix.

Now, assume that $\mathbf{\Phi}$ has $R\times R$ subblocks, where $R>2$ is an arbitrary positive integer, and in this case, Lemma 2 is valid. We need to prove that when $\mathbf{\Phi}$ has $(R+1)\times (R+1)$ subblocks, Lemma 2 still holds, which is equivalent to that $\mathbf{\Theta}_{p,R+1}, \mathbf{\Theta}_{R+1,p}, p=1,...,R+1$, are diagonal matrices and satisfy $\mathbf{\Theta}_{p,R+1}=\mathbf{\Theta}_{R+1,p}^H$. Similar to the above, due to $\mathbf{\Phi}\mathbf{\Phi}^{-1}=\mathbf{\Phi}^{-1}\mathbf{\Phi}=\mathbf{I}_{RL}$, we have
\begin{equation} \label{RRequation1}
\sum_{q=1}^{R+1}\mathbf{\Lambda}_{p,q}\mathbf{\Theta}_{q,p}=
\left\{ \begin{matrix}
\mathbf{I}_L,&p=1\hfill\\
\mathbf{0}_{L\times L},&p=2,...,R+1\hfill,
\end{matrix}
\right.\quad
\sum_{q=1}^{R+1}\mathbf{\Theta}_{p,q}\mathbf{\Lambda}_{q,p}=
\left\{ \begin{matrix}
\mathbf{I}_L,&p=1\hfill\\
\mathbf{0}_{L\times L},&p=2,...,R+1\hfill.
\end{matrix}
\right.
\end{equation}
Thus, we obtain that
\begin{equation} \label{RRresult1}
\mathbf{\Theta}_{R+1,p}=
\left\{ \begin{matrix}
\mathbf{\Lambda}_{p,R+1}^{-1}(\mathbf{I}_L-\sum_{q=1}^{R}\mathbf{\Lambda}_{p,q}\mathbf{\Theta}_{q,p}),&p=1\hfill\\
-\mathbf{\Lambda}_{p,R+1}^{-1}\sum_{q=1}^{R}\mathbf{\Lambda}_{p,q}\mathbf{\Theta}_{q,p},&p=2,...,R+1\hfill,
\end{matrix}
\right.
\end{equation}
\begin{equation} \label{RRresult2}
\mathbf{\Theta}_{p,R+1}=
\left\{ \begin{matrix}
(\mathbf{I}_L-\sum_{q=1}^{R}\mathbf{\Theta}_{p,q}\mathbf{\Lambda}_{q,p})\mathbf{\Lambda}_{R+1,p}^{-1},&p=1\hfill\\
-(\sum_{q=1}^{R}\mathbf{\Theta}_{p,q}\mathbf{\Lambda}_{q,p})\mathbf{\Lambda}_{R+1,p}^{-1},&p=2,...,R+1\hfill
\end{matrix}
\right.
\end{equation}
are all diagonal matrices. Note that $\mathbf{\Lambda}_{p,q}=\mathbf{\Lambda}_{q,p}^H$, $\mathbf{\Theta}_{p,q}=\mathbf{\Theta}_{q,p}^H, p, q=1,...,R$, and $\mathbf{\Lambda}_{p,R+1}=\mathbf{\Lambda}_{R+1,p}^H, p=1,...,R+1$. Equations $\mathbf{\Theta}_{p,R+1}=\mathbf{\Theta}_{R+1,p}^H, p=1,...,R+1$, hold, which completes the hypothesis-deduction method.

Second, it can be easily verified that $\mathbf{\Phi}\mathbf{\Phi}$ has the same matrix structure with $\mathbf{\Phi}$, which means that $\mathbf{\Phi}^{1/2}$ also has such matrix structure. Besides, we can use the method of undetermined coefficients to find out the specific values of $\mathbf{\Phi}^{1/2}$. In summary, $\mathbf{\Phi}_m^{-1/2}$ has the same matrix structure with $\mathbf{\Phi}_m$, and the proof is completed.

\section{Proof of Proposition 1}
Since ${\rm grad}_{\bar{R}_m}$ is regarded as the projection of ${\nabla}_{\bar{R}_m}$ on the tangent space $T_{\boldsymbol{f}_m}(\mathcal{M})$ at $\boldsymbol{f}_m$, it can be expressed by ${\rm grad}_{\bar{R}_m}=\mathop{\rm argmin}_{
\boldsymbol{\zeta}\in T_{\boldsymbol{f}_m}\mathcal{M}}
{\Vert\boldsymbol{\zeta}-{\nabla}_{\bar{R}_m}\Vert}^2$. Using the standard Lagrangian multiplier method
\begin{align}\label{Lagrangian}
L(\boldsymbol{\zeta},\lambda)&=(\boldsymbol{\zeta}-{\nabla}_{\bar{R}_m})^H(\boldsymbol{\zeta}-{\nabla}_{\bar{R}_m}) -\lambda(\boldsymbol{\zeta}^H\boldsymbol{f}_m+\boldsymbol{f}_m^H\boldsymbol{\zeta}),
\end{align}
let $\frac{\partial L(\boldsymbol{\zeta},\lambda)}{\partial \boldsymbol{\zeta}}=0$ and $\frac{\partial L(\boldsymbol{\zeta},\lambda)}{\partial {\lambda}}=0$, i.e.,
\begin{align} \label{derivative}
\boldsymbol{\zeta}-{\nabla}_{\bar{R}_m}-\lambda\boldsymbol{f}_m=0,
\quad\boldsymbol{\zeta}^H\boldsymbol{f}_m+\boldsymbol{f}_m^H\boldsymbol{\zeta}=0.
\end{align}
Solving (\ref{derivative}), we can get
\begin{align} \label{Laresult}
\boldsymbol{\zeta}={\nabla}_{\bar{R}_m} - \mathfrak{R}[{({\nabla}_{\bar{R}_m})}^H\boldsymbol{f}_m]\boldsymbol{f}_m,
\quad\lambda=-\mathfrak{R}[{({\nabla}_{\bar{R}_m})}^H\boldsymbol{f}_m].
\end{align}
Thus, the closed form of ${\rm grad}_{\bar{R}_m}$ is ${\rm grad}_{\bar{R}_m}={\nabla}_{\bar{R}_m} - \mathfrak{R}[{({\nabla}_{\bar{R}_m})}^H\boldsymbol{f}_m]\boldsymbol{f}_m$, and the proof is completed.


\section{Proof of Lemma 3}
Since the columns of matrix $\mathbf{W}_{NP}\mathbf{U}$ are periodic, i.e., $[\mathbf{W}_{NP}\mathbf{U}]_{n,q}=[\mathbf{W}_{NP}\mathbf{U}]_{iN+n,q}, \forall n,q\in\{0,...,N-1\}, i\in\{0,...,P-1\}$, $\mathbf{W}_{NP}\mathbf{U}$ can be divided into $P$ subblocks by every $N$ rows, and each subblock is given by $\mathbf{W}_N$. Therefore, $\mathbf{Q}_m$ can be expressed as
\begin{align}
\mathbf{Q}_m=\frac{1}{N_0}\begin{bmatrix} \mathbf{W}_N & \cdots & \mathbf{W}_N \end{bmatrix}^T
\mathbf{C}_m^* \begin{bmatrix} \mathbf{W}_N^H & \cdots & \mathbf{W}_N^H \end{bmatrix},
\end{align}
which is a $P\times P$ block matrix, and each subblock is $\frac{1}{N_0}\mathbf{W}_N\mathbf{C}_m^*\mathbf{W}_N^H$. Thus, the problem is to show that $\frac{1}{N_0}\mathbf{W}_N\mathbf{C}_m^*\mathbf{W}_N^H$ is a diagonal matrix.

As the columns of $\mathbf{X}_m$ are orthogonal, it satisfies that $\mathbf{X}_m^H\mathbf{X}_m=\mathbf{\Lambda}_{\mathbf{X}_m}$, where $\mathbf{\Lambda}_{\mathbf{X}_m}\in\mathbb{R}^{N\times N}$ is a diagonal matrix of full rank. Thus, $(\mathbf{X}_m^H)^{-1}$ can be simplified as $(\mathbf{X}_m^H)^{-1}=(\mathbf{X}_m^{-1})^H=(\mathbf{\Lambda}_{\mathbf{X}_m}^{-1}\mathbf{X}_m^H)^H
=\mathbf{X}_m\mathbf{\Lambda}_{\mathbf{X}_m}^{-1}$, and its columns are also orthogonal. Since $\mathbf{W}_N^H$ is a unitary matrix, all columns of both $(\mathbf{X}_m^H)^{-1}$ and $\mathbf{W}_N^H$ can be regarded as two sets of orthogonal basis of the complex field $\mathbb{C}^N$. This indicates that columns of $(\mathbf{X}_m^H)^{-1}$ can be expressed as the linear combination of columns of $\mathbf{W}_N^H$, i.e., $(\mathbf{X}_m^H)^{-1}=\mathbf{W}_N^H\mathbf{K}_m$. Therefore, $\frac{1}{N_0}\mathbf{W}_N\mathbf{C}_m^*\mathbf{W}_N^H$ can be further simplified as
\begin{align}
\nonumber
&\frac{1}{N_0}\mathbf{W}_N\mathbf{C}_m^*\mathbf{W}_N^H
=\frac{1}{N_0}\mathbf{W}_N(\mathbf{X}_m^H)^{-1}\mathbf{S}_m^*(\mathbf{X}_m)^{-1}\mathbf{W}_N^H\\
=&\frac{1}{N_0}\mathbf{W}_N\mathbf{W}_N^H\mathbf{K}_m\mathbf{S}_m^*\mathbf{K}_m^H\mathbf{W}_N\mathbf{W}_N^H
=\frac{1}{N_0}\mathbf{K}_m\mathbf{S}_m^*\mathbf{K}_m^H,
\end{align}
which is a diagonal matrix, and the proof is completed.


\ifCLASSOPTIONcaptionsoff
  \newpage
\fi



%
%

\end{document}